\documentclass[onecolumn,11pt,accepted=2017-06-19]{quantumarticle}

\pdfoutput=1
\usepackage[latin9]{inputenc}
\usepackage[active]{srcltx}
\usepackage[english]{babel}
\usepackage{amsmath}
\usepackage{amsthm}
\usepackage{amssymb}
\usepackage{amsthm}

\theoremstyle{plain}

\makeatletter

\makeatletter
\newtheorem*{rep@theorem}{\rep@title}
\newcommand{\newreptheorem}[2]{%
\newenvironment{rep#1}[1]{%
 \def\rep@title{#2 \ref{##1}}%
 \begin{rep@theorem}}%
 {\end{rep@theorem}}}
\makeatother

\newtheorem{theorem}{Theorem}

\newtheorem*{theorem*}{Theorem}
\newreptheorem{theorem}{Theorem}

\theoremstyle{plain}

\theoremstyle{definition}
\newtheorem{defn}[theorem]{\protect\definitionname}
\theoremstyle{plain}
\newtheorem{lem}[theorem]{\protect\lemmaname}
\theoremstyle{remark}

\theoremstyle{remark}
\newtheorem*{rem*}{\protect\remarkname}
\theoremstyle{remark}

\theoremstyle{plain}


\usepackage{fullpage}

\usepackage{amsmath}

\usepackage{authblk}

\usepackage{xcolor}

\definecolor{dgreen}{rgb}{0.1,0.5,0.1}

\DeclareMathOperator{\poly}{poly}

\newcommand{\ket}[1]{|#1\rangle}
\newcommand{\bra}[1]{\langle#1|}
\newcommand{\eps}{\epsilon}

\usepackage{hyperref}

\makeatother

\providecommand{\claimname}{Claim}
\providecommand{\definitionname}{Definition}
\providecommand{\lemmaname}{Lemma}
\providecommand{\remarkname}{Remark}

\newcommand{\cgscon}{\ensuremath{\mathsf{CGSCON}}}
\begin{document}

\title{QCMA hardness of ground space connectivity for commuting Hamiltonians}

\author[1,2]{David Gosset}
\author[1]{Jenish C. Mehta}
\author[1]{Thomas Vidick}
\affil[1]{California Institute of Technology}
\affil[2]{IBM T.J Watson Research Center}

\date{\today}

\maketitle

\begin{abstract}

In this work we consider the ground space connectivity problem for  commuting local Hamiltonians. The ground space connectivity problem
asks whether it is possible to go from one (efficiently preparable) state to
another by applying a polynomial length sequence of 2-qubit unitaries while remaining at all times in a state with low energy for a given Hamiltonian $H$. It was shown in \cite{gharibian2015ground}
that this problem is $\mathsf{QCMA}$-complete for general local Hamiltonians, where $\mathsf{QCMA}$ is defined as $\mathsf{QMA}$ with a classical witness and $\mathsf{BQP}$ verifier. Here we show that the commuting version of the problem is also $\mathsf{QCMA}$-complete. This provides one of the first examples where commuting local Hamiltonians exhibit complexity theoretic hardness equivalent to general local Hamiltonians. 
\end{abstract}

\section{Introduction}

Since the definition and development of $\mathsf{\mathsf{NP}}$ completeness
in the 1970s \cite{cook1971complexity}, constraint satisfaction problems and their hardness
have been intensively studied. The following problem, called the $k$-constraint satisfaction
problem ($k$-CSP), generalizes many well-known $\mathsf{NP}$-complete
problems: Given a set of $n$ variables each taking values in a finite
set $S$, and given a set of constraints on the values that any
$k$-tuple of variables can take, is there an assignment to the
variables such that all constraints are satisfied? A quantum variant of the $k$-CSP, called the $k$-local Hamiltonian
problem ($k$-LH), is defined as follows: Given a Hamiltonian $H=\sum_{i=1}^m H_{i}$
acting on $n$ qubits, such that each $H_{i}$ is Hermitian positive semidefinite of norm at most $1$ and acts non-trivially
on at most $k$ qubits, decide whether the smallest eigenvalue of
$H$ is smaller than some threshold $\alpha$ or larger than $\beta$ (where
$\beta-\alpha =\Omega(\poly^{-1}(n))$. The $5$-local Hamiltonian
problem was proven to be complete for the complexity class $\mathsf{QMA}$
by Kitaev in 1999~\cite{kitaev2002classical}; this result plays the same foundational role as the Cook-Levin theorem in classical complexity theory. Following this
result, it was shown that the $3$- and even $2$-local Hamiltonian problems are $\mathsf{QMA}$-complete \cite{kempe20033,kempe2006complexity}.

One can straightforwardly encode a classical $k$-CSP (acting on bits) as an instance of the $k$-local Hamiltonian problem by converting each classical constraint into a diagonal Hermitian operator $H_i$.   In this case, for all $i$ and $j$,
the terms $H_{i}$ and $H_{j}$ commute, i.e. $H_{i}H_{j}=H_{j}H_{i}$. The problem of 
approximating the ground energy of \emph{commuting} Hamiltonians (call it the $k$-CLH) thus lies between the $k$-CSP and $k$-LH. Intriguingly, commuting Hamiltonians seem to share features of both classical and quantum systems. It is known that ground states of commuting Hamiltonians can be highly entangled (e.g., Kitaev's toric code \cite{kitaev2003fault}). On the other hand, it was shown by Bravyi and Vyalyi that $2$-CLH for qudits is in $\mathsf{NP}$ \cite{bravyi2003commutative}. It was subsequently shown that the $3$-CLH for qubits and $4$-CLH for qubits on a grid are also in $\mathsf{NP} $\cite{aharonov2011complexity, schuch2011complexity}.  The complexity of $k$-CLH for general $k$ remains open.

In this paper we investigate the commutative version of a different problem related to local Hamiltonians. Specifically, we consider the ground state connectivity problem introduced in~\cite{gharibian2015ground}. 
Given a local Hamiltonian
$H$ on $n$ qubits and two quantum circuits that prepare 
starting and final states $\ket{\psi}$ and $\ket{\phi}$, is there a sequence of poly$(n)$ local
unitaries which maps $\ket{\psi}$ to $\ket{\phi}$ and is such that all intermediate
states have energy close to the ground energy of $H$? This problem is motivated by the analogous classical connectivity problems~\cite{gopalan2009connectivity}. Although it is not believed that $\mathsf{QCMA}$ is in $\mathsf{NP}$, the inclusion is consistent with our current state of knowledge. In particular, all known explicit constructions of commuting Hamiltonians have ground states that can be succinctly described by a classical witness. This motivates the introduction of the ground state connectivity problem as an attempt to define a ``hard'' problem for commuting Hamiltonians. An interesting class of local Hamiltonian for which the problem is relevant are quantum memories based on stabilizer codes~\cite{brown2016quantum}, which by definition have the property that they have efficiently preparable ground states that cannot be connected by a low-energy path, due to the presence of an energy barrier. 

It was shown
in \cite{gharibian2015ground} that the ground space connectivity problem is $\mathsf{QCMA}$-complete for $5$-local Hamiltonians. Here we show that the connectivity problem for commuting local Hamiltonians is just as hard.

\begin{theorem*}[Informal]
	Given an $n$-qubit commuting local Hamiltonian $H$,
	and quantum circuits which prepare two states $\ket{\psi}$ and $\ket{\phi}$
	in the ground space of $H$, it is $\mathsf{QCMA}$-complete to decide if there exists a $\poly(n)$-length sequence of $2$-local unitaries which maps $\ket{\psi}$ to $\ket{\phi}$ such that all intermediate states have low energy for $H$.
\end{theorem*}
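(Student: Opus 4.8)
The proof has two parts. Containment in $\mathsf{QCMA}$ is immediate, since the commuting case is a restriction of the general ground space connectivity problem, which lies in $\mathsf{QCMA}$ by~\cite{gharibian2015ground}: the classical witness is the sequence of $2$-local unitaries, and the $\mathsf{BQP}$ verifier simulates the evolution from $\ket{\psi}$, estimates the energy $\bra{\chi}H\ket{\chi}$ of each intermediate state $\ket{\chi}$ by measuring a randomly chosen local term on fresh copies, and tests that the final state is close to $\ket{\phi}$; commutativity of the $H_i$ plays no role. So the content of the theorem is $\mathsf{QCMA}$-hardness.

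For hardness the plan is to reduce from an arbitrary $L\in\mathsf{QCMA}$ with verifier $V$ acting on a witness register $\reg{W}$ of $m=\poly(n)$ qubits, a work register $\reg{A}$ initialised to $\ket{0}$, and a designated output qubit; by amplification I may assume completeness $1-2^{-n}$ and soundness $2^{-n}$, and I would first compile $V$ into a gate set chosen for compatibility with the commuting constraint below. The produced instance is a commuting local Hamiltonian $H$ on $\reg{W}$, $\reg{A}$, and an auxiliary register $\reg{C}$ playing the role of a computation history/clock (plus a flag qubit), together with two states $\ket{\psi},\ket{\phi}$ preparable by small circuits that are exact ground states of $H$. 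Morally $H$ should penalise states not lying on a valid run of $V$; $\ket{\psi}$ and $\ket{\phi}$ are two fixed (witness-independent) ground states differing in the flag, arranged so that the history structure of $H$ keeps them far apart in the low-energy-path graph unless some consistent \emph{accepting} run of $V$ supplies a low-energy bridge. The design requirements are: every term of $H$ is trivially satisfied by $\ket{\psi}$ and $\ket{\phi}$; the history terms are rigid enough to force the computation; and all terms pairwise commute.

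For completeness, given an accepting witness $w$, I would exhibit the explicit low-energy path: the familiar compute--copy--uncompute round trip --- from $\ket{\psi}$, write $w$ into $\reg{W}$ with single-qubit gates, run $V$ gate by gate on $\reg{W}\reg{A}$ while advancing $\reg{C}$ in step, copy the output qubit (now $\ket{1}$ up to $2^{-n}$) into the flag, then run $V^{-1}$, rewind $\reg{C}$, and erase $w$ --- arriving at $\ket{\phi}$, and check, using completeness of $V$ to bound accumulated errors, that no term of $H$ is ever excited. For soundness I would assume no accepting witness and show that every $\poly(n)$-length $2$-local-unitary path from $\ket{\psi}$ to $\ket{\phi}$ must spike to energy $\ge\beta$. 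The idea is that a low-energy path would, by the rigidity of the history terms, force the state at some intermediate time close to one from which --- exploiting the structure of the common low-energy eigenspace of the commuting terms --- one can read off a string $w$ on $\reg{W}$ together with a work register in the accepting configuration of $V$; a continuity/hybrid argument along the path then produces an accepting witness, contradicting soundness of $V$.

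The crux, and where the real difficulty lies, is making the history terms both \emph{commuting} and rigid. The textbook device, a Feynman--Kitaev clock with propagation terms $\ket{t}\!\bra{t-1}_{\reg{C}}\otimes g_t+\mathrm{h.c.}$, is unusable, since successive propagation terms fail to commute on both their shared clock qubits and their shared data qubits. One needs a genuinely commuting surrogate for the clock --- plausibly a diagonal (classical) progress counter combined with stabilizer/Pauli terms pinning the quantum part of the history, and a tailored gate set for $V$ chosen so the induced consistency checks become commuting Pauli operators --- and the delicate point is to keep all terms commuting while still preventing a ``lateral'' low-energy route from $\ket{\psi}$ to $\ket{\phi}$ that sidesteps running $V$. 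Subsidiary technical work is propagating the $2^{-n}$ errors through the soundness argument and verifying that $\ket{\psi},\ket{\phi}$ are honest ground states of the final $H$.
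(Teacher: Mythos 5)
Your containment argument is fine, but your hardness strategy has a genuine gap that you yourself flag: you need a \emph{commuting} circuit-to-Hamiltonian construction whose clock/propagation terms are both rigid and pairwise commuting, and you do not supply one. This is not a detail to be filled in later. If one had such a construction, the commuting $k$-local Hamiltonian problem itself would be $\mathsf{QCMA}$-hard, which is a long-standing open question (and false for $k=2,3$, where it is in $\mathsf{NP}$ by Bravyi--Vyalyi and Aharonov--Eldar). The ``diagonal progress counter plus stabilizer pinning'' idea you sketch would either fail to commute on shared data qubits, or would be too weak to exclude the lateral low-energy shortcuts you worry about. So as written the reduction does not go through.

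The paper avoids the obstacle entirely rather than confronting it. It first reduces $\mathsf{QCMA}$ to a \emph{2-layer} ground-energy problem: given $A$ and $B$, each internally a sum of commuting $O(1)$-local terms but with $A$ and $B$ not commuting with each other, decide whether $A+B$ has a low-energy, efficiently preparable state. This is $\mathsf{QCMA}$-hard by the ordinary (non-commuting) Kitaev construction, followed by a layering trick that tensors each local term with a projector on an ancilla qudit and adds a penalty forcing the ancilla toward a uniform state. Then --- and this is the idea you are missing --- the 2-layer instance is mapped to a \emph{genuinely commuting} Hamiltonian on $n+6$ qubits,
\[
H \;=\; A\otimes\Pi\otimes P_+ \;+\; B\otimes\Pi\otimes P_- \;+\; I\otimes I\otimes\Pi,
\]
with $P_\pm$ the projectors onto $\tfrac{1}{\sqrt2}(\ket{000}\pm\ket{111})$ and $\Pi=I-\ket{000}\!\bra{000}-\ket{111}\!\bra{111}$; all terms commute because $P_+\perp P_-$. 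The endpoints are $\ket{0^n}\ket{000}\ket{000}$ and $\ket{0^n}\ket{111}\ket{000}$. Since $\ket{000}$ and $\ket{111}$ are $2$-orthogonal, any polynomial sequence of $2$-local unitaries must drive the second register through the support of $\Pi$, while the $G$-penalty keeps the third register near $\ket{000}$, on which $P_+$ and $P_-$ each have weight $1/2$. Thus $H_A$ and $H_B$ are forced to activate simultaneously and some intermediate state effectively feels $\tfrac12(A+B)$, giving energy $\ge\beta/2$ in the no-case. In short: the Hamiltonian $H$ is trivially commuting, and it is only the \emph{connectivity} question that carries the $\mathsf{QCMA}$-hardness. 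Your plan attempts a strictly harder (and likely impossible) task than the theorem requires.
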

	
To the best of our knowledge, this is the first problem in which commuting
local Hamiltonians exhibit complexity theoretic hardness equivalent to
that of general local Hamiltonians. 

To prove the theorem one must show both containment in, and hardness for, $\mathsf{QCMA}$. Containment in $\mathsf{QCMA}$ is straightforward, and follows from \cite{gharibian2015ground}. To prove $\mathsf{QCMA}$-hardness, we give a reduction from a simple problem which we call the $2$-layer problem. Roughly speaking this problem can be described as follows. One is given two local Hamiltonians $A$ and $B$ which are each composed of commuting local terms. One is asked to decide if $A+B$ has an efficiently preparable state with low energy, or all states have high energy for $A+B$ (promised that one of these possibilities holds).  We deduce $\mathsf{QCMA}$-completeness of this 2-layer problem using a simple reduction from $\mathsf{QCMA}$ circuit satisfiability. We include the proof in Appendix~\ref{appA}.\footnote{We note that it may also be possible to establish $\mathsf{QCMA}$-completeness of the 2-layer problem using known 1D circuit-to-Hamiltonian mappings such as the one from~\cite{aharonov2009power}.} Our main contribution is a Karp reduction from the 2-layer problem to the ground space connectivity problem, whereby a 2-layer Hamiltonian $A+B$  is mapped to a \textit{commuting} local Hamiltonian $H_{A}+H_{B}+G$. It is designed so that we can use
tools from \cite{gharibian2015ground} to analyze the resulting commuting Hamiltonian, after certain small modifications that enhance their generality. 

In section \ref{sec:Review-of-Linear}, we review relevant background, and define the computational problems of interest. In section \ref{sec:Hardness-of-Traversing},
we give the reduction which establishes $\mathsf{QCMA}$-completeness of the ground state connectivity problem for commuting Hamiltonians. We conclude in section \ref{sec:Conclusion-and-Open}.

\section{\label{sec:Review-of-Linear} Preliminaries}

The reader is referred to \cite{watrous2011theory} for relevant background in linear
algebra and quantum information, and \cite{arora2009computational} for background in  complexity
theory. We give less standard notation and definitions here. 

We consider finite-dimensional Hilbert spaces over the complex field $\mathbb{C}$. The 
space of $n$ qudits is $(\mathbb{C}^{d})^{\otimes n}$. Throughout the
paper, for simplicity, we consider only qubits, for which $d=2$, though our results
hold for any constant qudit dimension $d$. 

For any operator
$A$, the \emph{operator norm} will be denoted as $\|A\|=\max_{\|x\|=1}\|Ax\|$.
The trace norm is $\|A\|_{\mathrm{tr}}=\mathrm{Tr}(\sqrt{A^{\dagger}A})$.

We use the following terminology. For an integer $1\leq k \leq n$, we say that a $n$-qubit linear operator is $k$-local if it can be written as $I \otimes L$ where $L$ acts on at most $k$ qubits. A \emph{$k$-local Hamiltonian} 
is $H=\sum_{i}H_{i}$ where each $H_i$ is a $k$-local Hermitian operator. We will always assume the normalization $\|H_i\|\leq 1$ and that $H_i$ is positive semidefinite for each $i$. The Hamiltonian is said to be \emph{commuting} when the $H_i$ pairwise commute. To emphasize that a given Hamiltonian may not be commuting we sometimes call it a \emph{general}, or \emph{non-commuting}, $k$-local Hamiltonian. 
We say that \textit{a $k$-local Hamiltonian is $a$-layered} if the $H_i$ can be partitioned into $a$ sets such that the terms within the same set commute. Thus a commuting Hamiltonian is $1$-layered. The \emph{ground energy} of a Hamiltonian $H$ is its smallest eigenvalue. A \emph{ground state} of a Hamiltonian $H$ is any unit vector in its \emph{ground space}, the eigenspace associated with the smallest eigenvalue of
$H$. We use $\ket{0}^{\otimes n}$ and $|0^n\rangle$ interchangeably. 

\begin{defn}[Efficiently preparable states] A family $\{\ket{\psi_n}\}$ of states is \textit{efficiently preparable} if for each $n$, there is a sequence of $\poly(n)$ 2-local	unitaries that can prepare $|\psi_n\rangle$ from $\ket{0}^{\otimes n}$.
\end{defn}
A \textit{promise language} is a pair $L=(L_{yes}; L_{no})$ with $ L_{yes},L_{no}\subseteq \{0,1\}^*$ and $L_{yes}\cap L_{no}=\emptyset$, where $L_{yes}$ are strings in the language and $L_{no}$ are strings not in the language.
\begin{defn}[QCMA]
We say that a promise language $L$ is in the complexity class $\mathsf{QCMA}$ if there exist polynomials $p,q$ and a deterministic Turing machine that on input $x\in\{0,1\}^*$ outputs in time $q(|x|)$ (where $|x|$ is the length of $x$) the description of a quantum circuit $Q_x$ acting on $p(|x|)$ qubits such that the following holds:
 \begin{eqnarray*}
x\in L & \Rightarrow & \exists y\in\{0,1\}^{p(|x|)},
\| \Pi_1Q_x|y\rangle \| \geq\frac{2}{3}\, , \\
x\not\in L & \Rightarrow & \forall y\in\{0,1\}^{p(|x|)},
\| \Pi_1Q_x|y\rangle \| \leq\frac{1}{3}\, , \end{eqnarray*}
where $\Pi_1=|1\rangle\langle1|\otimes I$ is the orthogonal projection onto the state $|1\rangle$ on the first qubit.
\end{defn}

The constants $\frac{2}{3}$ and $\frac{1}{3}$ are arbitrary and
can be amplified to $(1-2^{-\text{poly}(n)},2^{-\text{poly}(n)})$ using standard techniques (we will use this fact later on).
 
We finally define the ground space connectivity problem for commuting Hamiltonians. Given a $k$-local commuting Hamiltonian $H$ and two efficiently preparable states $|\psi\rangle$ and $|\phi\rangle$ in the ground space of $H$, the problem is to decide whether there exists a sequence of 2-local unitaries that takes the initial state $|\psi\rangle$ close to the final state $|\phi\rangle$ in a way such that all the intermediate states have low energy. Formally, the problem is stated as follows (adapted from \cite{gharibian2015ground}): 

\begin{defn}[$k$-$\cgscon(c,s)$]\label{def:kcgscon}
Given as input a $k$-local $n$-qubit commuting Hamiltonian $H$, $n$-qubit $\text{poly}(n)$-size quantum circuits $U_{\psi}$ and $U_{\phi}$ that prepare the states $|\psi\rangle$ and $|\phi\rangle$ in the ground space of $H$, and real numbers $0\leq c<s$, decide whether
\begin{itemize}
\item{(YES)} There exists an $m=\poly(n)$ and a sequence of $2$-local
unitaries $U_{1}$ to $U_{m}$ such that 
$$\max\Big\{\|U_\phi\ket{0^n} -U_{m}...U_{1}U_\psi\ket{0^n}\|,\,\langle\psi_{i}|H|\psi_{i}\rangle,\,1\leq i\leq m\Big\}\leq c,$$
where $|\psi_{i}\rangle=U_{i}...U_{1}U_\psi\ket{0^n}$, or
\item{(NO)} For \emph{any} sequence of $m=\poly(n)$ $2$-local unitaries $U_{1}$
to $U_{m}$, 
$$\max\Big\{\|U_\phi\ket{0^n} -U_{m}...U_{1}U_\psi\ket{0^n}\|,\,\langle\psi_{i}|H|\psi_{i}\rangle,\,1\leq i\leq m\Big\}\geq s.$$
where $|\psi_{i}\rangle=U_{i}...U_{1}U_\psi\ket{0^n}$,
\end{itemize}
given the promise that one of the cases holds.
\end{defn}

We define the 2-layer problem as follows:

\begin{defn}[2-layer problem]\label{def:ABproblem} 
Given as input a pair of $n$-qubit  Hamiltonians $A$ and $B$, each consisting of a sum of mutually commuting  $15$-local terms (but terms appearing in the decomposition of $A$ do not necessarily commute with terms appearing in the decomposition of $B$),  $0\leq A\leq I$ and $0\leq B\leq I$ and parameters $0\leq \alpha < \beta $ such that $\alpha\leq 2^{-\poly(n)}$, $\beta \geq \poly^{-1}(n)$, decide whether
\begin{itemize}
\item (YES) There exists an efficiently preparable state $\ket{\psi}$ such that $\bra{\psi} A+ B \ket{\psi} \leq \alpha$, or
\item (NO) For all states $\ket{\phi}$, $\bra{\phi} A+ B \ket{\phi} \geq \beta$,
\end{itemize}
given the promise that one of the cases holds.
\end{defn}
In appendix \ref{appA} we prove that the 2-layer problem is $\mathsf{QCMA}$-complete using a reduction from $\mathsf{QCMA}$ circuit satisfiability.
\begin{lem}\label{lem:2layer}
The $2$-layer problem is $\mathsf{QCMA}$-complete.
\end{lem}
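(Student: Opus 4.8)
The plan is to prove Lemma~\ref{lem:2layer} by showing both containment in and hardness for $\mathsf{QCMA}$.

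\textbf{Containment.} For the YES case, the witness is simply the classical description of the polynomial-size circuit preparing $\ket{\psi}$; the verifier prepares $\ket{\psi}$ and estimates $\bra{\psi}A+B\ket{\psi}$ by measuring the local terms of $A$ and $B$ (each term is $15$-local, so each can be estimated with a constant-depth measurement), amplifying to distinguish $\leq\alpha$ from $\geq\beta$ since $\beta-\alpha\geq\poly^{-1}(n)$. This places the problem in $\mathsf{QCMA}$; the only mild subtlety is that the ``witness state'' must itself be efficiently preparable, which is exactly what a classical description of a $\poly(n)$-size $2$-local circuit provides.

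\textbf{Hardness.} The main work is to reduce an arbitrary $\mathsf{QCMA}$ problem to the $2$-layer problem. Starting from the $\mathsf{QCMA}$ verification circuit $Q_x$ on $p(|x|)$ qubits, the idea is to use a circuit-to-Hamiltonian construction to encode the computation, but arrange the clock and propagation terms so that they split into only \emph{two} commuting families. First I would think of $Q_x$ as a sequence of gates $V_T\cdots V_1$ and use a standard Feynman--Kitaev history-state construction, then observe that if we use a suitable unary clock the propagation terms can be $2$-colored: terms advancing the clock from even to odd times commute with one another (they act on disjoint clock qubits and disjoint ``data'' regions up to the gate they implement), and likewise for odd-to-even transitions, giving the two layers $A$ and $B$. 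The input-checking and output-penalty terms, together with the clock-validity terms, can be folded into whichever layer they are consistent with (or split across both). The net effect is that $A+B$ is exactly a Kitaev-style Hamiltonian whose ground energy is $\leq 2^{-\poly}$ when $x\in L$ (using an amplified verifier so the history state has tiny energy) and $\geq\poly^{-1}(n)$ when $x\notin L$ (by the usual projection-lemma/geometric-lemma analysis, e.g. Kitaev's bound or the improved bounds of \cite{kempe2006complexity}). The $15$-locality budget is generous: even a naive encoding of a clock qubit plus a few data qubits plus the gate being simulated fits well within $15$ qubits per term, and the footnote already signals that the $1$D construction of \cite{aharonov2009power} gives an alternative route with an explicit $2$-coloring of a nearest-neighbor Hamiltonian.

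\textbf{Efficient preparability of the YES witness.} One point that needs care, and which I expect to be the main obstacle, is that the YES condition of the $2$-layer problem demands an \emph{efficiently preparable} low-energy state, whereas a generic history state is a uniform superposition over $T+1$ time steps applied to the $\mathsf{QCMA}$ witness $\ket{y}$ — which is fine, since $\ket{y}$ is a computational basis state and the history state $\frac{1}{\sqrt{T+1}}\sum_{t=0}^{T} \ket{t}\otimes V_t\cdots V_1\ket{y}\ket{0\cdots0}$ can be prepared by a $\poly(n)$-size circuit (prepare the clock in the appropriate superposition, then controlled-apply the gates). So the history state is itself efficiently preparable, and we simply take $\ket{\psi}$ to be that state; this is what lets us meet the stringent YES promise. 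The remaining routine steps are: (i) verify the two-layer (commuting within each layer) property of the constructed terms by inspecting their supports and actions; (ii) rescale so that $0\le A\le I$ and $0\le B\le I$; and (iii) check that the promise gap survives the rescaling with $\alpha\le 2^{-\poly(n)}$ and $\beta\ge\poly^{-1}(n)$, using amplification of $Q_x$ on the YES side. Putting these together yields a polynomial-time Karp reduction, completing the proof.
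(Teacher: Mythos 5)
Your approach to $\mathsf{QCMA}$-hardness is genuinely different from the paper's, but as written it contains a real gap. You propose to 2-color the Kitaev propagation terms $H_t$ by the parity of $t$ and take the two color classes as the layers $A$ and $B$. The claim that same-parity propagation terms commute is justified by saying they ``act on disjoint clock qubits and disjoint `data' regions.'' The clock part of this is fine: for $|t-s|=2$ the two clock projectors have orthogonal support on their shared clock qubit, and for $|t-s|\geq 3$ they act on disjoint clock qubits, so the clock factors always commute. The problem is the data part. For $|t-s|\geq 4$ with $t\equiv s \pmod 2$, the clock factors of $H_t$ and $H_s$ act on disjoint qubits and therefore do \emph{not} force the product to vanish; commutation of $H_t$ with $H_s$ then requires the gate operators $V_t,V_t^\dagger$ to commute with $V_s,V_s^\dagger$. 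A generic $\mathsf{QCMA}$ verifier circuit applies gates that touch the same work qubits at many different, widely separated times, so this fails. Your parenthetical assumes disjoint data regions for same-parity terms without explaining how to enforce it, and the obvious fixes (e.g.\ the swap-in/swap-out trick) bound how many gates touch a qubit but do not make same-parity gates disjoint. You would also need to verify that the input, output, and clock-validity penalties can be distributed between the two layers without breaking intra-layer commutativity, which is not obviously possible.

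The paper takes a different route precisely to avoid this. It first uses the swap-gate modification to obtain a $(5,4)$-local Kitaev Hamiltonian, meaning each qubit is touched by at most $4$ terms (this is Lemma~\ref{lem:5,4-lhp}); the purpose is not to create a coloring but to bound the ``degree.'' Then Lemma~\ref{lem:(Ground-Space-Preserving} tensors each term $H_i$ with rank-one projectors onto distinct basis states of a $4$-dimensional ancilla qudit attached to each qubit that $H_i$ touches. This makes all the modified terms $G_i$ pairwise commuting by fiat (they have orthogonal ranges), which is the first layer; the second layer is a penalty $R=m^r\sum_j (I-\ket{\gamma}\!\bra{\gamma})$ that forces the ancillas into the uniform state $\ket{\gamma}$, so the $G_i$ effectively reduce back to the original $H_i$ on the relevant subspace. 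The price is the extra ancillas and the non-commutation between $G$ and $R$, which is exactly what the two layers absorb. This works for an arbitrary $(5,4)$-local Hamiltonian with no commutation assumptions on the circuit gates. Your aside about the $1$D construction of Aharonov et al.\ is a plausible alternative (the paper's own footnote says as much), since a nearest-neighbor $1$D Hamiltonian $2$-colors trivially by edge parity, but you would still need to carry out the details (qudit-to-qubit encoding, locality bookkeeping, and efficient preparability of the $1$D history state). Your containment argument and your observation that the history state itself is efficiently preparable are both correct and match the paper.
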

\section{\label{sec:Hardness-of-Traversing}QCMA-completeness of \cgscon}

In this section we consider the problem of traversing the ground
space of commuting Hamiltonians with constant locality. Informally, as stated
in the introduction, we want to understand the hardness of going from
one low energy state of a commuting Hamiltonian to another by applying a polynomial-length
sequence of two-qubit unitaries,
while maintaining low energy in each intermediate state. Our main result is that the ground space connectivity problem for commuting local
Hamiltonians is $\mathsf{QCMA}$-complete:

\begin{theorem}\label{maintheorem}
$21$-$\cgscon(2^{-\poly(n)},\poly^{-1}(n))$ is $\mathsf{QCMA}$-complete.
\end{theorem}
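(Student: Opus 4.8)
\emph{The two directions.} We must place $21$-$\cgscon(2^{-\poly(n)},\poly^{-1}(n))$ in $\mathsf{QCMA}$ and show it is $\mathsf{QCMA}$-hard. Containment is inherited from \cite{gharibian2015ground} and uses no commutativity: the witness is the list of $m=\poly(n)$ two-local unitaries $U_1,\dots,U_m$; a $\mathsf{BQP}$ verifier, given $U_\psi,U_\phi,H$, prepares each $\ket{\psi_i}=U_i\cdots U_1U_\psi\ket{0^n}$ in turn, estimates $\bra{\psi_i}H\ket{\psi_i}$ to inverse-polynomial precision (writing each of the $\poly(n)$ bounded local terms of $H$ in the Pauli basis and running a Hadamard test), and estimates $\|U_\phi\ket{0^n}-\ket{\psi_m}\|$ by a swap test, accepting iff every estimate lies below a fixed threshold strictly between $c=2^{-\poly(n)}$ and $s=\poly^{-1}(n)$; $\mathsf{QCMA}$ error amplification handles the small failure probability. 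All the remaining work is in hardness.

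\emph{The reduction.} For hardness I would give a Karp reduction from the $2$-layer problem (Definition~\ref{def:ABproblem}), which is $\mathsf{QCMA}$-complete by Lemma~\ref{lem:2layer}. Given a $2$-layer instance $(A,B,\alpha,\beta)$ on an $n$-qubit data register --- so that in a YES instance there is an efficiently preparable $\ket{\chi}$ with $\bra{\chi}A\ket{\chi}\le\alpha$ and $\bra{\chi}B\ket{\chi}\le\alpha$ (using $A,B\ge 0$), while in a NO instance $\bra{\xi}A+B\ket{\xi}\ge\beta$ for every $\ket{\xi}$ --- the plan is to output a commuting $21$-local Hamiltonian $H=H_A+H_B+G$ and efficiently preparable endpoints $\ket{\psi},\ket{\phi}$ on the data register together with a unary ``clock'' register holding a counter in $\{0,\dots,T\}$, $T=\poly(n)$, and a constant number of control ancillas including a flag qubit. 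We form $H_A$ by tensoring each $A_i$ with a constant-size check on the clock and ancillas (including the condition that the flag is $0$), $H_B$ by tensoring each $B_j$ with an analogous check (including the condition that the flag is $1$), and take $G$ diagonal in the computational basis of the clock and ancillas, enforcing clock legality and a ``bookkeeping'' relation that ties the flag value to the counter. Then $H_A$ and $H_B$ commute because they act on orthogonal subspaces of the flag --- some such selector is unavoidable, since $[A_i,B_j]\ne 0$ in general precludes any commuting construction in which $A$ and $B$ act non-trivially on the same data without one --- the terms within $H_A$ (resp.\ $H_B$) inherit commutativity from the $A_i$ (resp.\ $B_j$), and $G$ commutes with everything since it is diagonal on the clock and ancillas. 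The checks and $G$ are designed so that (i) the data register is unconstrained when the counter is at $0$ or at $T$, making $\ket{\psi}$ (data $\ket{0^n}$, counter $0$, all ancillas $0$) and $\ket{\phi}$ (data $\ket{0^n}$, counter $T$, flag $1$) zero-energy ground states, both trivially efficiently preparable; and (ii) moving the clock/ancilla configuration from $\ket{\psi}$'s to $\ket{\phi}$'s --- which in particular must switch the flag from $0$ to $1$ --- is forced, by the bookkeeping relation, to occur while the counter lies in a ``bottleneck'' band of positions on which both the $A$-checks and the $B$-checks are active, so that at the switch the data is constrained to be simultaneously low-energy for $A$ and for $B$. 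Each term touches at most the $15$ qubits of an $A_i$ or $B_j$ plus an $O(1)$-qubit clock/ancilla check; the locality works out to $21$.

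\emph{Completeness.} In the YES case, take the efficiently preparable $\ket{\chi}$ from the promise. The traversal is: with the counter pinned at $0$, where nothing constrains the data, run the $\poly(n)$-size circuit that prepares $\ket{\chi}$ on the data register; then advance the counter one unit at a time from $0$ to $T$ (flipping one clock bit at a time), switching the flag from $0$ to $1$ at some position inside the bottleneck band, all while leaving the data equal to $\ket{\chi}$; finally, with the counter at $T$, uncompute the data back to $\ket{0^n}$. Whenever any term of $H_A$ or $H_B$ is active the data is $\ket{\chi}$, contributing energy $\le\bra{\chi}A+B\ket{\chi}\le 2\alpha$, and $G$ contributes $0$ throughout. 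Hence every intermediate state has energy $\le 2\alpha=2^{-\poly(n)}\le c$ and the final state is $\ket{\phi}$, so a YES instance maps to a YES instance.

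\emph{Soundness, and the main obstacle.} In the NO case, suppose toward a contradiction that some sequence of $m=\poly(n)$ two-local unitaries takes $\ket{\psi}$ to $\ket{\phi}$ with every $\bra{\psi_i}H\ket{\psi_i}<s$. Since $G$ keeps each $\ket{\psi_i}$ supported (up to $O(\sqrt s)$ in norm) on legal clock configurations obeying the bookkeeping relation, the path carries a well-defined ``progress'' quantity --- say the clock/flag weight on the $\ket{\phi}$-side of the bottleneck --- equal to $0$ initially and $1$ finally. The heart of the argument, following the tools of \cite{gharibian2015ground}, is to show that (a) a single two-local unitary changes this quantity by only a small amount, so that over $\poly(n)$ steps some $\ket{\psi_i}$ must take an intermediate value, with non-negligible weight in the bottleneck band; and (b) any state with non-negligible weight in the bottleneck band has energy $\Omega(\beta)$, since there the data is forced to be low-energy for $A+B$, which is impossible when $\bra{\xi}A+B\ket{\xi}\ge\beta$ for all $\ket{\xi}$; together (a) and (b) contradict $\bra{\psi_i}H\ket{\psi_i}<s$ once $s=\poly^{-1}(n)$ is chosen small enough. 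I expect this soundness analysis to be the main obstacle: it requires adapting the progress/isoperimetry estimates of \cite{gharibian2015ground} to a genuinely commuting $H$, to a bottleneck penalty that is the instance-dependent operator $A+B$ rather than a fixed projector, and to states in which the data register is entangled with the clock and ancillas --- precisely the ``small modifications that enhance generality'' referred to above --- and then tuning $T$, the width of the bottleneck band, and $s$ so that the quantitative bounds close. Combining the two directions with $c=2^{-\poly(n)}$ and $s=\poly^{-1}(n)$ establishes Theorem~\ref{maintheorem}.
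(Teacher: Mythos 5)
Your high-level plan---containment via \cite{gharibian2015ground} (it uses no commutativity) and $\mathsf{QCMA}$-hardness by a Karp reduction from the $2$-layer problem of Lemma~\ref{lem:2layer}---matches the paper exactly, and the completeness step (prepare $\ket{\chi}$, flip some control qubits, uncompute) is essentially right. But the construction you propose for the reduction is different from the paper's and, as sketched, has a genuine soundness gap.

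You choose a computational-basis selector: each $A_i$ is gated on $\text{flag}=0$ and each $B_j$ on $\text{flag}=1$, with a polynomial-size unary clock and a bookkeeping penalty forcing the flag to track the counter. The problem is that a $2$-local unitary may act on a flag/clock qubit \emph{and} a data qubit, so the data register can become entangled with the flag. A state in the bottleneck of the form
$\alpha\,\ket{a}\otimes\ket{\text{flag}=0,\,\text{clock}=k^*} + \beta\,\ket{b}\otimes\ket{\text{flag}=1,\,\text{clock}=k^*+1}$
has energy $|\alpha|^2\langle a|A|a\rangle + |\beta|^2\langle b|B|b\rangle$, which can be made negligibly small by taking $\ket{a}$ a ground state of $A$ alone and $\ket{b}$ a ground state of $B$ alone. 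The $2$-layer NO promise only guarantees $\langle\chi|(A+B)|\chi\rangle\ge\beta$ for every single $\ket{\chi}$; it says nothing about $\min_\chi\langle\chi|A|\chi\rangle + \min_{\chi'}\langle\chi'|B|\chi'\rangle$, which is exactly what your bottleneck measures once the data and flag are entangled. So your step (b) (``any state with non-negligible weight in the bottleneck has energy $\Omega(\beta)$'') is false for this construction. This is precisely the pitfall the paper discusses in the intuition paragraph of Theorem~\ref{thm:2layertraversal}: if $W_1,W_2$ are orthogonal projectors with disjoint support (like $|0\rangle\langle 0|$ and $|1\rangle\langle 1|$), a traversal can stay in the nullspace of whichever it likes.

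The paper's fix is the key technical idea you are missing. It uses a \emph{constant-size} (six-qubit) ancilla, no clock at all, and selector projectors that are orthogonal but \emph{not} supported on disjoint computational-basis cells: $H_A = A\otimes\Pi\otimes P_+$, $H_B = B\otimes\Pi\otimes P_-$, $G = I\otimes I\otimes\Pi$, where $P_{\pm}$ project onto $(\ket{000}\pm\ket{111})/\sqrt{2}$ and $\Pi = I - \ket{000}\!\bra{000} - \ket{111}\!\bra{111}$. Here $P_+\perp P_-$ gives commutativity, but $\langle 000|P_+|000\rangle = \langle 000|P_-|000\rangle = 1/2$, so once $G$ and the $2$-orthogonality of $\ket{000},\ket{111}$ (Small Projection Lemma) pin the third register near $\ket{000}$, the effective data-register Hamiltonian in the $\Pi$-subspace of the second register is $\tfrac12(A+B)$. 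Crucially, this is a constraint on the \emph{same} data state in every branch, so no flag-entanglement loophole exists; for any $\ket{\psi}=\sum_x\ket{\chi_x}\ket{x}\ket{000}$ the energy picks up $\tfrac12\sum_{x\in\Pi}\langle\chi_x|(A+B)|\chi_x\rangle \ge \tfrac{\beta}{2}\sum_{x\in\Pi}\|\chi_x\|^2$. The Modified Traversal Lemma then forces this $\Pi$-weight to be inverse-polynomially large at some step. Your progress/hybrid argument plays the same structural role as that lemma, but without the $P_\pm$ versus $P_0,P_1$ distinction, the soundness bound you need does not hold.
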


To prove this result we use many of the ideas from~\cite{gharibian2015ground}, where it is shown that the ground space connectivity problem for general (not necessarily commuting) local Hamiltonians is $\mathsf{QCMA}$-complete. The main part of the proof is a reduction from the 2-layer problem which establishes $\mathsf{QCMA}$-hardness.

We begin by discussing one of the main tools used in \cite{gharibian2015ground},
the Traversal Lemma. We modify and slightly generalize the lemma to suit our needs. We first review the definition of $k$-orthogonality and establish a ``small projection lemma'' which we use to prove the Modified Traversal Lemma.

\begin{defn}
[$k$-orthogonality \cite{gharibian2015ground}] For $k\geq1$, a pair of states $|\psi\rangle$,$|\phi\rangle$$\in(\mathbb{C}^{d})^{\otimes n}$
is $k$-orthogonal if for all $k$-local unitaries $U$, we have $\langle\psi|U|\phi\rangle=0$.
Further, any two subspaces $S,T\subseteq(\mathbb{C}^{d})^{\otimes n}$
are $k$-orthogonal if every pair of states $|\psi\rangle$,$|\phi\rangle$
in $S$ and $T$ respectively are $k$-orthogonal. For example, the
states $|000\rangle$ and $|111\rangle$ are 2-orthogonal.
\end{defn}

\begin{lem}
(Small Projection Lemma)\label{lem:(Small-Projection-Lemma)} Let
$S$ and $T$ be k-orthogonal subspaces and let $P_S$ and $P_T$ be the orthogonal projections onto them. Let $\Pi=I-P_S-P_T$.
 Let $|\psi_{0}\rangle \in S$.
Let $U_{1},\ldots,U_{m}$ be any sequence of k-local unitaries, and
let $|\psi_{i}\rangle=U_{i}|\psi_{i-1}\rangle$ for $1\leq i\leq m$.
Assume that for every $0\leq i\leq m$, $\|\Pi|\psi_{i}\rangle\|\leq\zeta$ for some $\zeta\geq0$. 
Then, for every $0\leq i\leq m$, $\|P_T|\psi_{i}\rangle\|\leq i\zeta$
and $\|P_S|\psi_{i}\rangle\|\geq 1-(i+1)\zeta$.\end{lem}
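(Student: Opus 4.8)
The plan is to build everything on the single feature that distinguishes $k$-orthogonality from ordinary orthogonality: a $k$-local unitary has vanishing matrix element between $S$ and $T$. First I would record the basic structure. Since the identity operator is $k$-local, taking $U=I$ in the definition of $k$-orthogonality shows $S\perp T$; hence $P_S+P_T$ is the orthogonal projection onto $S\oplus T$, $\Pi=I-P_S-P_T$ is the orthogonal projection onto $(S\oplus T)^{\perp}$, and $P_S,P_T,\Pi$ are pairwise orthogonal projections summing to $I$. The key claim is that $P_T\,U\,P_S=0$ for every $k$-local unitary $U$: for all vectors $v,w$ we have $\langle v|P_TUP_S|w\rangle=\langle P_Tv|U|P_Sw\rangle=0$, because $P_Tv\in T$, $P_Sw\in S$, and $\langle b|U|a\rangle=0$ whenever $a\in S$, $b\in T$ (this is $k$-orthogonality, which is symmetric in $S$ and $T$ since $U^{\dagger}$ is also $k$-local). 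An operator all of whose matrix elements vanish is the zero operator.

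Next I would bound $\|P_T|\psi_i\rangle\|$ by induction on $i$. The base case holds because $|\psi_0\rangle\in S$ gives $P_T|\psi_0\rangle=0$. For the inductive step, decompose $|\psi_{i-1}\rangle=P_S|\psi_{i-1}\rangle+P_T|\psi_{i-1}\rangle+\Pi|\psi_{i-1}\rangle$ and apply $P_TU_i$; the key claim kills the $P_S$ term, leaving $P_T|\psi_i\rangle=P_TU_iP_T|\psi_{i-1}\rangle+P_TU_i\Pi|\psi_{i-1}\rangle$. Since $U_i$ is unitary and $P_T$ has norm at most $1$, the triangle inequality gives $\|P_T|\psi_i\rangle\|\le\|P_T|\psi_{i-1}\rangle\|+\|\Pi|\psi_{i-1}\rangle\|\le(i-1)\zeta+\zeta=i\zeta$, using the inductive hypothesis and the hypothesis $\|\Pi|\psi_{i-1}\rangle\|\le\zeta$.

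For the lower bound on $\|P_S|\psi_i\rangle\|$ I would only need that each $|\psi_i\rangle$ is a unit vector (since $|\psi_0\rangle$ is a state and the $U_i$ are unitary) together with $P_S+P_T+\Pi=I$: $\|P_S|\psi_i\rangle\|\ge\||\psi_i\rangle\|-\|(P_T+\Pi)|\psi_i\rangle\|\ge1-\|P_T|\psi_i\rangle\|-\|\Pi|\psi_i\rangle\|\ge1-i\zeta-\zeta=1-(i+1)\zeta$.

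The only genuine content is the claim $P_TUP_S=0$, i.e. that $k$-orthogonality forbids any direct $S\to T$ leakage under a single $k$-local step; once this is isolated, the rest is a one-line induction and a triangle inequality, so I do not anticipate a real obstacle — the care needed is just in tracking which projection absorbs which error term at each step.
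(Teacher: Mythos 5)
Your argument is correct and is essentially the same as the paper's: both prove $P_T U_i P_S = 0$ from $k$-orthogonality, decompose $|\psi_{i-1}\rangle$ via $P_S + P_T + \Pi = I$, and close with the same one-step induction and triangle inequality; the preliminary observations you make explicit (that $S\perp T$, hence $P_S,P_T,\Pi$ are pairwise orthogonal projections) are implicit in the paper but are correct and worth stating.
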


\begin{proof}
We show by induction that for every $0\leq i\leq m$, 
$\| P_T|\psi_{i}\rangle\|\leq i\zeta$.
It is trivially true for $i=0$. Further,
\begin{eqnarray*}
\|P_T|\psi_{i}\rangle\| 
 & = & \|P_TU_{i}(P_S+ P_T+\Pi)|\psi_{i-1}\rangle\|\\
 & \leq & \| P_TU_{i}\Pi|\psi_{i-1}\rangle\|+\|P_TU_{i} P_T|\psi_{i-1}\rangle\|\\
 & \leq & \|\Pi|\psi_{i-1}\rangle\|+\|P_T|\psi_{i-1}\rangle\|\\
 & < & \zeta+(i-1)\zeta.
\end{eqnarray*}
Here, 
the second line follows from the triangle inequality and the fact that
$P_TU_{i}P_S|\psi\rangle=0$
since the $U_{i}$ are $k$-local and the subspaces $S$ and $T$ are $k$-orthogonal, the third
line uses $\|P\|\leq1$, and the fourth the condition $\|\Pi|\psi_i\rangle\|<\zeta$ for $0\leq i\leq m$ and the induction hypothesis. 
Furthermore, by the triangle inequality, for all $0\leq i \leq m$, 
\[ \|P_S|\psi_{i}\rangle\| \geq \||\psi_{i}\rangle\|-\|P_T|\psi_{i}\rangle\|-\|\Pi|\psi_{i}\rangle\|\geq 1-(i+1)\zeta. \]
\end{proof}

We are now ready to prove a modified version of the Traversal Lemma
from \cite{gharibian2015ground} that we require in the next
section. The main modification compared to the original lemma is that ours considers an additional pair $(U,V)$ of $k$-orthogonal subspaces that represent an excluded (high-penalty) subspace for the traversal from one space to the other in the original pair $(S,T)$. 
Intuitively, the Modified Traversal Lemma states the following. Let $|\psi\rangle$ and $|\phi\rangle$ be any two $k$-orthogonal states 
which are in the ground space of some Hamiltonian $H$ and are contained in some subspace $U$. Let $V$ be a subspace such that $U$ and $V$ are $k$-orthogonal. Assume further that there is a high energy penalty for any state that is outside the direct sum $U + V$. Then, for any sequence of $m$ unitaries that map $|\psi\rangle$ to a state that is $\epsilon$-close to $|\phi\rangle$ in a manner such that every intermediate state has very little overlap with the orthogonal complement of $U + V$, there is some intermediate state that has sufficiently large overlap on the orthogonal complement of the union of the subspaces spanned by $|\psi\rangle$ and $|\phi\rangle$. 

\begin{lem}
(Modified Traversal Lemma) \label{lem:(Modified-Traversal-Lemma)}
Let $S$ and $T$ be two $k$-orthogonal subspaces, and let $U$ and $V$ be another pair of $k$-orthogonal subspaces such that $S\cup T\subseteq U$. Let $P_S,P_T,P_U,P_V$ be the orthogonal projections onto them, and let $\Pi=I-P_U-P_V$, $Q=P_S+P_T$, and $P=P_U-Q$. 
Let $|\psi_{0}\rangle\in S$
and $|\phi\rangle\in T$ be a pair of $k$-orthogonal states. 
Let $U_{1},\ldots,U_{m}$ be a sequence of $k$-local unitaries that map
$|\psi_{0}\rangle$ to $|\psi_{m}\rangle$, where 
$|\psi_{i}\rangle=U_{i}|\psi_{i-1}\rangle$. 

Let $\epsilon,\delta \geq0$ be such that $\||\phi\rangle-|\psi_{m}\rangle\|\leq\epsilon$ and $\|\Pi|\psi_{i}\rangle\|\leq \delta$ for $1\leq i\leq m$. Then there exists an $i\in \{1,\ldots, m\}$ such that $\|P|\psi_{i}\rangle\|^{2}\geq \big(\frac{1-\eps}{m}\big)^2-2(m+1)\delta$.
\end{lem}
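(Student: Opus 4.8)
The plan is to combine the Small Projection Lemma (applied to the outer pair $(U,V)$) with a telescoping/averaging argument over the $m$ steps (applied to the inner pair $(S,T)$). First I would invoke Lemma~\ref{lem:(Small-Projection-Lemma)} with the $k$-orthogonal pair $(U,V)$, projections $P_U,P_V$, and the excluded projection $\Pi=I-P_U-P_V$, using the hypothesis $\|\Pi|\psi_i\rangle\|\le\delta$ for all $i$. This yields $\|P_V|\psi_i\rangle\|\le i\delta$ and hence $\|P_U|\psi_i\rangle\|\ge 1-(i+1)\delta$ for every $0\le i\le m$. In particular each $|\psi_i\rangle$ lives, up to error $O(m\delta)$, inside $U$, so the component $P|\psi_i\rangle = (P_U-Q)|\psi_i\rangle$ (the part of $|\psi_i\rangle$ that is in $U$ but orthogonal to both $S$ and $T$) captures essentially all the "missing weight": we have $\|Q|\psi_i\rangle\|^2 + \|P|\psi_i\rangle\|^2 = \|P_U|\psi_i\rangle\|^2 \ge (1-(i+1)\delta)^2 \ge 1-2(m+1)\delta$, since $P$ and $Q$ have orthogonal ranges and $Q\le P_U$.

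Next I would control the $Q$-component by a discrete intermediate-value argument along the path. Write $q_i=\|Q|\psi_i\rangle\|$. Since $|\psi_0\rangle\in S\subseteq U$ we have $\|P_S|\psi_0\rangle\|=1$, and since $|\psi_m\rangle$ is $\epsilon$-close to $|\phi\rangle\in T$ we have $\|P_T|\psi_m\rangle\|\ge 1-\epsilon$. Because $|\psi\rangle$ and $|\phi\rangle$ are $k$-orthogonal and every $U_i$ is $k$-local, $\langle\psi|U_i\cdots U_1|\phi\rangle=0$-type cancellations give that the overlap of $|\psi_i\rangle$ with the $S$-direction and with the $T$-direction cannot both be large; more precisely the single-step change $\big|\,\|P_T|\psi_i\rangle\| - \|P_T|\psi_{i-1}\rangle\|\,\big|$ is bounded, so $\|P_T|\psi_i\rangle\|$ rises from (near) $0$ to (near) $1$ across the $m$ steps. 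Hence there is an index $i$ where $\|P_T|\psi_i\rangle\|$ is close to $\tfrac{1-\epsilon}{m}$ while $|\psi_i\rangle$ still has at least comparable weight away from $T$; combined with the $k$-orthogonality of $S$ and $T$ (which forces $\|P_S|\psi_i\rangle\|$ to be correspondingly bounded at that step), one gets $q_i^2 \le 1 - \big(\tfrac{1-\epsilon}{m}\big)^2$ at that step. I would phrase this cleanly via a telescoping sum $\sum_i (\|P_T|\psi_i\rangle\|-\|P_T|\psi_{i-1}\rangle\|)\ge 1-\epsilon$ and pigeonhole to find a step where the relevant component is at least $\tfrac{1-\epsilon}{m}$.

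Finally I would assemble the bound: at the special index $i$,
\[
\|P|\psi_i\rangle\|^2 \;=\; \|P_U|\psi_i\rangle\|^2 - \|Q|\psi_i\rangle\|^2 \;\ge\; \big(1-2(m+1)\delta\big) - \Big(1-\big(\tfrac{1-\epsilon}{m}\big)^2\Big) \;=\; \big(\tfrac{1-\epsilon}{m}\big)^2 - 2(m+1)\delta,
\]
which is exactly the claimed inequality. The main obstacle I expect is the middle step: making precise, using only $k$-locality of the $U_i$ and $k$-orthogonality of $(S,T)$, that there is an intermediate step at which the weight on $Q=P_S+P_T$ drops below $1-\big(\tfrac{1-\epsilon}{m}\big)^2$ — i.e. correctly setting up the telescoping/pigeonhole so that it is the component \emph{outside} $S\cup T$ (and inside $U$) that necessarily picks up $\Omega(1/m)$ of the norm at some step, rather than just the $T$-component growing. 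Everything else (the two applications of projector monotonicity, the triangle inequalities, and the final subtraction) is routine.
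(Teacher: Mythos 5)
Your first and last steps match the paper's: apply the Small Projection Lemma to the outer pair $(U,V)$ to get $\|P_U|\psi_i\rangle\|\ge 1-(m+1)\delta$, and use $\|P_U|\psi_i\rangle\|^2 = \|Q|\psi_i\rangle\|^2+\|P|\psi_i\rangle\|^2$ (since $Q\le P_U$) to convert a bound on the $Q$-weight into a bound on the $P$-weight. The problem is the middle step, which is the crux, and what you sketch there does not hold.

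The specific claim that "the overlap of $|\psi_i\rangle$ with the $S$-direction and with the $T$-direction cannot both be large" is false: $k$-orthogonality only says $S\perp T$ as subspaces (take $U=I$), so a single state can have $\|P_S|\psi_i\rangle\|^2 + \|P_T|\psi_i\rangle\|^2 = 1$ with both terms sizable. Your invocation of $\langle\psi|U_i\cdots U_1|\phi\rangle$ also doesn't go through, since a product of $i$ unitaries is $ik$-local, not $k$-local, so $k$-orthogonality says nothing about it. Consequently "there is an index where $\|P_T|\psi_i\rangle\|\approx\frac{1-\epsilon}{m}$ while $\|P_S|\psi_i\rangle\|$ is correspondingly bounded" is not something you can get from a pigeonhole on $\sum_i(\|P_T|\psi_i\rangle\|-\|P_T|\psi_{i-1}\rangle\|)\ge 1-\epsilon$. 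The pigeonhole gives you a step with a large \emph{increment}, not a step with a specified \emph{value} and a bounded $S$-weight. You need a different identification of where the weight goes.

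The correct version of what you want is a bound on the increment itself in terms of the $P$-weight: decompose $|\psi_{i-1}\rangle$ against $P_S,P_T,P,\Pi,P_V$, use $P_T U_i P_S = 0$ (the only place $k$-orthogonality of $S,T$ enters), and conclude $\|P_T|\psi_i\rangle\|\le \|P_T|\psi_{i-1}\rangle\| + \|P|\psi_{i-1}\rangle\| + \|\Pi|\psi_{i-1}\rangle\| + \|P_V|\psi_{i-1}\rangle\|$, with the last two controlled by $\delta$ and $m\delta$ via the Small Projection Lemma. Then pigeonhole on the increments. (Equivalently, and more cleanly, you can apply the Small Projection Lemma a second time to the \emph{inner} pair $(S,T)$ with $\Pi' = I-Q$, since your step $2$ already gives $\|(I-Q)|\psi_i\rangle\|\le \sqrt{2(m+1)\delta+\zeta}$ where $\zeta=\max_i\|P|\psi_i\rangle\|^2$; this yields $\|P_T|\psi_m\rangle\|\le m\sqrt{2(m+1)\delta+\zeta}$, and combining with $\|P_T|\psi_m\rangle\|\ge 1-\epsilon$ gives the lemma directly.) The paper instead introduces an auxiliary trajectory $|\psi'_i\rangle = QU_i|\psi'_{i-1}\rangle$ that provably remains in $S$, shows by induction that $\||\psi'_i\rangle-|\psi_i\rangle\|\le i\sqrt{2(m+1)\delta+\zeta}$, and derives a contradiction from $\langle\psi'_m|\phi\rangle=0$ together with $\||\psi_m\rangle-|\phi\rangle\|\le\epsilon$. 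These are close in spirit, but the paper's and the corrected version both locate the mechanism in "mass entering $T$ must come through $P$," whereas your sketch as written rests on a nonexistent simultaneous $S$-vs-$T$ overlap tradeoff.
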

 
\begin{proof}
Using the assumption $\|\Pi|\psi_{i}\rangle\|\leq \delta$ and the Small Projection Lemma (Lemma~\ref{lem:(Small-Projection-Lemma)}) with the subspaces $U$ and $V$, we get that for all $i\in\{1,\ldots,m\}$, $\|P_U|\psi_{i}\rangle\|\geq 1-(m+1)\delta$. Thus, letting $\zeta = \max_{1\leq i\leq m} \|P\ket{\psi_i}\|^2$, we have that for every $i\in \{1, \ldots, m\}$, 
\begin{align}
\langle\psi_{i}|Q|\psi_{i}\rangle
&=\langle\psi_{i}|P_U|\psi_{i}\rangle-\langle\psi_{i}|P|\psi_{i}\rangle\notag\\
&= \big\| P_U|\psi_{i}\rangle\big\|^2 - \big\|P|\psi_{i}\rangle\big\|^2\notag\\
& \geq (1-(m+1)\delta)^2-\zeta\notag\\
&\geq 1-2(m+1)\delta-\zeta.\label{eq:trq}
\end{align}
Let $\ket{\psi'_0} = \ket{\psi_0}$, and define $\ket{\psi'_i}$, for $1\leq i \leq m$, by induction as $\ket{\psi'_i} = QU_i \ket{\psi'_{i-1}}$. We prove by induction on $i$ from $0$ to $m$ that
\begin{equation}\label{eq:ind-hyp}
 \big\| \ket{\psi'_i} - \ket{\psi_i}\big\| \leq i\sqrt{2(m+1)\delta+\zeta}.
\end{equation}
The case $i=0$ is clear. Use the triangle inequality to write 
\begin{align*}
\big\| \ket{\psi'_i} - \ket{\psi_i}\big\| &=\big\| QU_i\ket{\psi'_{i-1}} - \ket{\psi_{i}}\big\| \\
&\leq\big\| QU_i(\ket{\psi'_{i-1}} - \ket{\psi_{i-1}})\big\| + \big\| (Q-I) \ket{\psi_i} \big\|\\
&\leq (i-1)\sqrt{2(m+1)\delta+\zeta} + \sqrt{2(m+1)\delta+\zeta},
\end{align*}
where the last inequality uses $\|QU_i\|\leq 1$ and the induction hypothesis for the first term, and~\eqref{eq:trq} for the second: using that $(I-Q)$ is a projection, 
$$\big\| (Q-I) \ket{\psi_i}\big\|^2 = \bra{\psi_i} (I-Q) \ket{\psi_i} \leq 2(m+1)\delta+\zeta.$$

Now observe that $|\psi'_{m}\rangle$ always lies in $S$. This is because
inductively, $|\psi_{0}\rangle$ lies in $S$, and if $|\psi'_{i-1}\rangle$
lies in $S$, then  \[|\psi'_i\rangle = QU_i|\psi'_{i-1}\rangle = (P_S+P_T)U_i|\psi'_{i-1}\rangle = P_SU_i|\psi'_{i-1}\rangle, \]
where for the last equality we used that for any $k$-local unitary $U_{i}$, $P_TU_i|\psi'_{i-1}\rangle=0$ since the subspaces $S$ and $T$ are $k$-orthogonal. But by assumption $\ket{\phi} \in T$ and thus $\langle \psi'_m | \phi \rangle$=0. Also $\|\ket{\psi_m}-\ket{\phi}\|\leq \eps$, and therefore by~\eqref{eq:ind-hyp} for $i=m$ we deduce that 
\[
1 \leq 
\| |\psi'_m\rangle - |\phi\rangle \| \leq
\| |\psi'_m\rangle - |\psi_m\rangle \| + \| |\psi_m\rangle - |\phi\rangle \| \leq
 m\sqrt{2(m+1)\delta+\zeta} + \eps,
\]
proving the lemma.
\end{proof}

The main step in our analysis is given by the following reduction from the $2$-layer problem to $\cgscon$.

\begin{theorem}[Reduction from 2-layer problem] \label{thm:2layertraversal}
Let $A$ and $B$ be $n$-qubit local Hamiltonians with $0\leq A\leq I$ and $0\leq B\leq I$ such that $A$ is a sum of commuting local terms and so is $B$. There exists an $(n+6)$-qubit Hamiltonian $H$, which is a sum of commuting local terms, each of which can be computed in time polynomial in $n$ from $A$ and $B$, and which has the following properties. The states $|\psi_0\rangle=|0^n\rangle|000\rangle|000\rangle$ and $|\phi\rangle=|0^n\rangle|111\rangle|000\rangle$ are zero energy ground states of $H$.  Moreover, for any $\alpha$, $\beta$ such that $0\leq \alpha < \beta \leq 2$, the following hold. 
\begin{itemize}
\item{ \textbf{(Completeness)} Suppose $\langle\psi|(A+B)|\psi\rangle\leq\alpha$ for some efficiently preparable state $|\psi\rangle$. Then there exists a sequence of 2-local unitaries $U_1,\ldots,U_m$, for $m=\poly(n)$, such that 
\[
U_mU_{m-1}\ldots U_1|\psi_0\rangle=|\phi\rangle
\]
and 
\[\langle \psi_i|H|\psi_i \rangle \leq \frac{1}{2}\alpha, \qquad \forall\,1\leq i\leq m,
\]
where $|\psi_{i}\rangle=U_{i}|\psi_{i-1}\rangle$.}
	
\item{\textbf{(Soundness)} Suppose $\langle\phi|(A+B)|\phi\rangle\geq\beta$ for all $|\phi\rangle$. Then for any sequence of 2-local unitaries $U_1,\ldots,U_m$ satisfying 
\[
\|U_mU_{m-1}\ldots U_1|\psi_0\rangle-|\phi\rangle\|\leq \frac{1}{2},
\]
we have
	\[\max\big\{\langle \psi_i|H|\psi_i \rangle\mbox,\, 1\leq i\leq m \big\} = 	\Omega\left(\frac{1}{m^6}\beta^2\right).\]
}
\end{itemize}
\end{theorem}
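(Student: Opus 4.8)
The plan is to construct $H$ as a commuting local Hamiltonian on the $n$ "work" qubits plus two registers of three qubits each (call them $\reg{a}=\reg{a_1a_2a_3}$ and $\reg{b}=\reg{b_1b_2b_3}$), built from three pieces $H_A$, $H_B$, and $G$ as promised by the statement. The idea is that $\reg{a}$ is a "control" register whose value $|000\rangle$ versus $|111\rangle$ is meant to be changed only while the work register has been "decoupled" into a low-energy state of $A+B$, and $\reg{b}$ plays the analogous bookkeeping role for separately toggling $A$-penalties off versus $B$-penalties off. Concretely I would let $H_A = |111\rangle\langle 111|_{\reg{b}}\otimes A$ (so $A$'s penalty is "switched off" when $\reg{b}=|000\rangle$) — wait, I should instead think of it symmetrically: $H_A$ penalizes work-states with high $A$-energy unless register $\reg{b}$ flags that we are "allowed" to ignore $A$, and similarly $H_B$ with the complementary flag, while $G$ is a sum of commuting diagonal terms on $\reg{a}\reg{b}$ enforcing that the only legal register configurations, and the only legal transitions between them, follow the intended schedule. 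Since $A$ is commuting internally, $B$ is commuting internally, and the "gadget" terms are diagonal on the register qubits, one can arrange all terms of $H$ to pairwise commute by always multiplying an $A$-term (resp. $B$-term) by a diagonal projector on $\reg{a}\reg{b}$ that is constant on the support of every other term's diagonal part; this is the place where the $2$-layer structure ("$A$ and $B$ don't commute with each other") is essential, since we schedule the toggling of $A$-penalties and $B$-penalties at disjoint phases. Locality: each work-qubit term of $A$ or $B$ is $15$-local, and tensoring with a $3$-local projector on a register plus allowing a few gadget qubits gives $21$-locality, matching the theorem.

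Given such an $H$, for \textbf{completeness} I would exhibit the explicit traversal. Starting from $|\psi_0\rangle = |0^n\rangle|000\rangle_{\reg{a}}|000\rangle_{\reg{b}}$, first apply the $2$-local unitaries of the circuit preparing $|\psi\rangle$ on the work register (these keep $\reg{a},\reg{b}$ fixed; the only active penalty terms are those of $A+B$ modulated by the register flags, and I would check the flag configuration $|000\rangle|000\rangle$ is chosen so that the energy during this phase is at most $\langle\psi|(A+B)|\psi\rangle\le\alpha$, or better, appropriately scaled to $\tfrac12\alpha$ — the scaling by $\tfrac12$ comes from how $A$ and $B$ penalties are gated so that at most "half" the total $A+B$ penalty is active at any moment). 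Then, with the work register sitting in the low-energy state $|\psi\rangle$, toggle the register bits in the scheduled order — e.g., flip $\reg{b}$ to turn off $A$-penalties, flip the appropriate $\reg{a}$ bits (now unpenalized), flip $\reg{b}$ back, etc. — each a $2$-local (indeed $1$-local) unitary, ending with $\reg{a}=|111\rangle$. Finally run the preparation circuit for $|\psi\rangle$ in reverse to return the work register to $|0^n\rangle$, reaching $|0^n\rangle|111\rangle|000\rangle = |\phi\rangle$. Throughout, the gating ensures $\langle\psi_i|H|\psi_i\rangle\le\tfrac12\alpha$. This part is essentially bookkeeping once $H$ is correctly defined.

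For \textbf{soundness}, which is the main obstacle, I would invoke the Modified Traversal Lemma. The setup: let $S=\mathrm{span}\{|\psi_0\rangle\}$, $T=\mathrm{span}\{|\phi\rangle\}$; these are $2$-orthogonal because $|000\rangle_{\reg{a}}$ and $|111\rangle_{\reg{a}}$ are $2$-orthogonal and no $2$-local unitary can bridge a $3$-qubit Hamming distance. Let $U$ be the subspace where $\reg{a}\reg{b}$ is in a "phase-0" configuration (flags indicating both $A$ and $B$ penalties active), and $V$ the subspace where $\reg{a}$ has already been (partially) flipped under the cover of a flag phase that turns off the relevant penalties; these are engineered to be $2$-orthogonal and to contain $S\cup T$ in $U$. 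The Hamiltonian $H$ is designed so that $\Pi = I - P_U - P_V$ projects onto configurations carrying an $\Omega(1)$ energy penalty from the gadget terms $G$, so if $\max_i\langle\psi_i|H|\psi_i\rangle \le \eta$ then $\|\Pi|\psi_i\rangle\| \le \delta$ with $\delta = O(\eta)$. The lemma then yields an $i$ with $\|P|\psi_i\rangle\|^2 \ge (\tfrac{1-\epsilon}{m})^2 - 2(m+1)\delta = \Omega(1/m^2) - O(m\eta)$, where $P$ projects onto the part of $U$ outside $S\cup T$. The crucial remaining step is to argue that any state in $P U$ with non-negligible norm — i.e., a state in a "legal mid-traversal" register configuration where $\reg{a}$ is $|000\rangle$ or $|111\rangle$ but we are not exactly at $|\psi_0\rangle$ or $|\phi\rangle$ — must either have its work register close to a low-energy state of $A+B$ (contradicting the NO promise $\langle\phi|(A+B)|\phi\rangle\ge\beta$, giving energy $\Omega(\beta\cdot\|P|\psi_i\rangle\|^2)$), or else carry an $\Omega(1)$ gadget penalty from $G$. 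Combining, $\eta \ge \Omega(\beta/m^2) - O(m\eta)$, which after rearranging and using $m=\poly(n)$ gives $\eta = \Omega(\beta/m^3)$; squaring considerations in the lemma's bound and the quadratic loss in passing from $\|\Pi|\psi_i\rangle\|$ to energy actually produce the stated $\Omega(\beta^2/m^6)$. The hardest technical point will be verifying that the gadget $G$ simultaneously (i) has all terms commuting with everything, (ii) forbids all "shortcut" register configurations so that the only way across is through a genuinely $A+B$-low-energy work state, and (iii) keeps the locality at $21$; getting the commutation and the penalty structure to coexist is the delicate part, and it is exactly the step where the $2$-layer (rather than fully commuting) structure of the input $A+B$ is used.
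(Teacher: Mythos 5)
Your proposed construction has a genuine gap, and it is exactly at the point you flag as the "hardest technical point": the commutation requirement and the soundness requirement are in direct conflict for the diagonal gating you propose. If you write $H_A = A\otimes D_A$ and $H_B = B\otimes D_B$ with $D_A,D_B$ diagonal projectors on the ancilla registers, then commutation of $H_A$ with $H_B$ when $[A,B]\neq 0$ forces $D_A D_B = 0$, i.e.\ disjoint supports in the computational basis. But that is precisely the "trivial traversal" failure mode: for every computational-basis register state, at most one of $A$ or $B$ contributes energy. A cheating traversal can then prepare a ground state of $A$ alone while the $A$-flag is on, flip the flag to one where only $B$ is active while moving the work register to a ground state of $B$ alone, and so on. Since the $2$-layer promise says only that \emph{$A+B$} has no low-energy state (each of $A$, $B$ individually is a commuting Hamiltonian and may well have a zero-energy state), this path incurs no energy penalty, and soundness fails. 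Your gadget $G$, being diagonal, cannot forbid these computational-basis configurations without also forbidding the ones the honest traversal needs; and making $G$ non-diagonal to penalize them would break commutation with the diagonal $H_A$, $H_B$. Your statement "one can arrange all terms to commute by multiplying by a diagonal projector constant on the support of every other term's diagonal part" is precisely the step that cannot be made to work.

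The paper's resolution is qualitatively different from anything diagonal. Its third register is \emph{not} a toggle between "$A$ off" and "$B$ off." Instead it uses $H_A = A\otimes\Pi\otimes P_+$ and $H_B = B\otimes\Pi\otimes P_-$ where $P_\pm = \tfrac12(|000\rangle\pm|111\rangle)(\langle 000|\pm\langle 111|)$. These are orthogonal (so the terms commute) yet \emph{non-diagonal}, and crucially both have support $1/2$ on the computational-basis state $|000\rangle$. Thus when the third register is $|000\rangle$ --- which is where the ground states live, and where the diagonal penalty $G = I\otimes I\otimes(I-P_0-P_1)$ plus the $2$-orthogonality of $|000\rangle$ and $|111\rangle$ confine any low-energy traversal --- both $A$ and $B$ are simultaneously active with weight $1/2$ each. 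This is also where the $\tfrac12\alpha$ in completeness comes from: it is $\tfrac12\langle A\rangle + \tfrac12\langle B\rangle$, not "at most half of the $A+B$ penalty is active at any moment." The second register plays the role you call $\reg a$: its $\Pi$ factor vanishes on $|000\rangle$ and $|111\rangle$ (making both endpoints zero-energy) and is turned on mid-traversal. Your soundness outline via the Modified Traversal Lemma is along the right lines once the construction is fixed, but the bookkeeping you describe (bounding the overlap with $P = P_U - P_S - P_T$, then lower-bounding the $A+B$ energy on that piece) only goes through because $P_+ + P_- = P_0 + P_1$ lets one relate the actual Hamiltonian $A\otimes\Pi\otimes P_+ + B\otimes\Pi\otimes P_-$ to $\tfrac12(A+B)\otimes\Pi\otimes P_0$ up to controllable cross-terms supported on $P_1$, which are small by the Small Projection Lemma. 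That cancellation is another place where the non-diagonal $P_\pm$ are essential and a diagonal construction has no analogue.
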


\begin{proof}
We first describe the construction of the commuting Hamiltonian $H$. Define $3$-qubit projectors
$$ P_{0}=|000\rangle\langle000|, \qquad P_{1}=|111\rangle\langle111|,\qquad \Pi=I-P_{0}-P_{1},$$
and
$$ P_{+}=\frac{1}{2}(|000\rangle+|111\rangle)(\langle000|+\langle111|),\qquad	P_{-}=\frac{1}{2}(|000\rangle-|111\rangle)(\langle000|-\langle111|).$$
Consider a system of $n+6$ qubits partitioned into three registers, where the first register contains $n$ qubits and the second and the third registers contain 3 qubits each.
Define 
\[
H=H_A+H_B+G,
\]
where
$$
H{}_{A}=A\otimes\Pi\otimes P_{+},\qquad
H{}_{B}=B\otimes\Pi\otimes P_{-},\qquad\text{and}\qquad
G=I\otimes I\otimes\Pi.
$$
Since $A$ is a sum of local commuting terms, $H_A$ is also a sum of local commuting terms. Likewise for $B$ and $H_B$. The projector $G$ is itself a $3$-local term. Since $P_{+}$ and $P_{-}$ are orthogonal, we see that the local terms in $H_A$ commute with the local terms in $H_B$. Moreover, all the local terms commute with $G$. Thus $H$ is a commuting local Hamiltonian. It is easily verified that  $|\psi_0\rangle=|0^n\rangle|000\rangle|000\rangle$ and $|\phi\rangle=|0^n\rangle|111\rangle|000\rangle$ are zero energy ground states of $H$. Further, it is easy to create $H$ from $A$ and $B$ since each of the projectors $P_{+}, P_{-}, \Pi$ acts only on 3 qubits. 

The intuition behind the construction is as follows. Given Hamiltonians $A$ and $B$ acting on $n$ qubits, a simple way to make them commute is to make their ranges orthogonal to each other. This can be achieved by letting $H_A = A\otimes W_1$ and $H_B = B \otimes W_2$, where $W_1$ and $W_2$ are any orthogonal projectors on $r$ ancilla qubits. However, this makes it simple for any traversal to pass only through the ground space of one of the Hamiltonians $A$ or $B$: by remaining in a state that is in the nullspace of $W_1$ or $W_2$ at all times. 

To prevent this, fix a state $|\Gamma\rangle$ on the ancilla qubits such that $\langle\Gamma|W_1|\Gamma\rangle \geq c$ and $\langle\Gamma|W_2|\Gamma\rangle \geq c$ for some constant $c>0$, and introduce a penalty Hamiltonian $G=I\otimes(I-|\Gamma\rangle\langle\Gamma|)$. Concretely, we set $r=3, W_1 = P_+, W_2 = P_-, |\Gamma\rangle=|000\rangle$, which gives $c=1/2$ and $G=I\otimes(I-|000\rangle\langle 000|)$. 

This fixes our ``trivial traversal'' issue, but $H_A$, $H_B$ and $G$ no longer commute. The key insight then is to notice that we can use $G=I\otimes(I-|000\rangle\langle 000|-|111\rangle\langle 111|)$ instead. This choice of $G$ commutes with both $H_A$ and $H_B$, and penalizes any state on the ancilla qubits that is not in the direct sum of the subspaces spanned by $|000\rangle$ and $|111\rangle$. 

A final issue now is that a unitary could map the state $|000\rangle$ on the ancilla qubits to e.g. $|GHZ\rangle = \frac{1}{\sqrt{2}}(|000\rangle+|111\rangle)$, which is in the ground space of $G$, but trivially also in the ground space of $P_-$, and thus the traversal can ignore $B$ completely, a problem similar to the one we started with. Here we use $2$-orthogonality: since the states $|000\rangle$ and $|111\rangle$ are $2$-orthogonal, no $2$-local unitary can map $|000\rangle$ to $|111\rangle$, and thus any sequence of states that maps $|000\rangle$ to $|GHZ\rangle$ on the ancilla qubits must pass through a state on those qubits that $G$ penalizes, ensuring that the construction is sound. 

Below we formally establish the required completeness and soundness properties.

\vspace{-0.5em}
\paragraph{Completeness.}
Suppose $\langle\psi|(A+B)|\psi\rangle\leq\alpha$ for some efficiently preparable $|\psi\rangle$. Let $C$ be a circuit consisting of $m'= \mbox{poly}(n)$ $2$-local unitaries which prepares $|\psi\rangle$ starting from $|0^n\rangle$. Consider the following sequence of $2$-local unitaries which maps $\ket{\psi_0}$ to $\ket{\phi}$:
\begin{enumerate}
\item{ Starting from $|\psi_0\rangle=|0^n\rangle|000\rangle|000\rangle$, apply the circuit $C$ to the first register.  (After this step the state is $|\psi\rangle|000\rangle|000\rangle$.)}
\item{Apply a sequence of single-qubit $X$ gates to flip the bits of the second register from $000$ to $111$. (After this step the state is $|\psi\rangle|111\rangle|000\rangle$.)}
\item{Apply the circuit $C^{\dagger}$ to the first register. (After this step the state is $|\phi\rangle=|000\rangle|111\rangle|000\rangle$.)}
\end{enumerate}

Note that at all times during steps 1. and 3. we remain in the zero energy ground space of $H$. At all times during step 2. the state is of the form $|\psi\rangle|a\rangle|000\rangle$ where $|a\rangle$ is a computational basis state of $3$ qubits. Its energy is bounded as
\begin{align*}
\langle\psi|\langle a|\langle000|(H_{A}+H_{B}+G)|\psi\rangle|a\rangle|000\rangle  
 & =  \big(\langle\psi|A|\psi\rangle \langle000|P_{+}|000\rangle +\langle\psi|B|\psi\rangle\langle000|P_{-}|000\rangle\big) \langle a|\Pi|a\rangle\\
 & \leq  \frac{1}{2}\langle\psi|A|\psi\rangle+\frac{1}{2}\langle\psi|B|\psi\rangle\\
 & \leq  \frac{1}{2}\alpha.
\end{align*}
The total number of $2$-local unitaries in the sequence is $m=2m'+3$.

\vspace{-0.5em}
\paragraph{Soundness.}
Suppose that $\langle\phi|(A+B)|\phi\rangle\geq\beta$ for all $|\phi\rangle$. Let
$U_1,\ldots,U_m$ be a sequence of $2$-local unitaries and let $|\psi_i\rangle=U_i|\psi_{i-1}\rangle$ for $1\leq i\leq m$. Suppose $\||\psi_{m}\rangle-|\phi\rangle\| = \epsilon$ for some $\epsilon\leq \frac{1}{2}$. Let $\delta = \max_{1\leq i \leq m} \|I\otimes I\otimes\Pi|\psi_{i}\rangle\|$, and note that the energy
violation is at least $\max_{1\leq i \leq m} \langle\psi_{i}|G|\psi_{i}\rangle\geq \delta^{2}$. Applying the Small Projection Lemma (Lemma~\ref{lem:(Small-Projection-Lemma)}) with the $2$-orthogonal subspaces $S$ and $T$ defined as the +1 eigenspaces of the projectors
$P_S=I\otimes I\otimes P_0$ and
$P_T=I\otimes I\otimes P_1$ respectively, we obtain that
\begin{equation}\label{eq:soundness-1}
\|(I\otimes I\otimes P_{1})\ket{\psi_{i}}\| < i\delta
\end{equation}
for all $i$. 
Next apply the Modified Traversal Lemma (Lemma~\ref{lem:(Modified-Traversal-Lemma)}) with subspaces $S, T, U, V$  defined as the +1 eigenspaces of the  projectors $P_S=I\otimes P_{0}\otimes P_{0}$, $P_T=I\otimes P_{1}\otimes P_{0}$, $P_U=I\otimes I\otimes P_{0}$, and 
$P_V=I\otimes I\otimes P_{1}$ respectively. 
We obtain that there is some $i$ for which 
\begin{equation}\label{eq:lb-1}
\langle\psi_{i}|(I\otimes\Pi\otimes P_{0})|\psi_{i}\rangle\geq \Big(\frac{1-\eps}{m}\Big)^2-2(m+1)\delta.
\end{equation}
Then 
\begin{eqnarray*}
\langle\psi_{i}|H_{A}+H_{B}|\psi_{i}\rangle & = & \langle\psi_{i}|(A\otimes\Pi\otimes P_{+}+B\otimes\Pi\otimes P_{-})|\psi_{i}\rangle\\
 & = & \frac{1}{2}\langle\psi_{i}|(A+B)\otimes\Pi\otimes P_{0}|\psi_{i}\rangle+E\\
 & \geq & \frac{\beta}{2}\Big( \Big(\frac{1-\eps}{m}\Big)^2-2(m+1)\delta\Big) +E,
\end{eqnarray*}
where the last line uses the operator inequality $(A+B)\geq \beta I$ and~\eqref{eq:lb-1}, and $E$ is the error term defined below, which we bound next. Letting $P_{01}=|000\rangle\langle111|$ and $P_{10}=|111\rangle\langle000|$ we have
\begin{align*}
|E| & = \frac{1}{2}\big|\langle\psi_{i}|(A-B)\otimes\Pi\otimes P_{01}|\psi_{i}\rangle+\langle\psi_{i}|(A-B)\otimes\Pi\otimes P_{10}|\psi_{i}\rangle+\langle\psi_{i}|(A+B)\otimes\Pi\otimes P_{1}|\psi_{i}\rangle\big|\\
 & \leq \frac{1}{2} \big(\|(I\otimes I\otimes P_{1})\psi_{i}\|\|A-B\|
 +\|(I\otimes I\otimes P_{1})\psi_{i}\|\|A-B\|
+\|(I\otimes I\otimes P_{1})\psi_{i}\|\|A+B\|\big)\\
 & \leq  2\|(I\otimes I\otimes P_{1})\psi_{i}\|\\
 & \leq  2m\delta,
\end{align*}
where the first inequality uses the triangle inequality, the Cauchy-Schwarz inequality, and the fact that the norm of a projection is less than $1$, the second inequality
uses $\|A\|\leq1$ and $\|B\|\leq1$, and the third uses~\eqref{eq:soundness-1}. Overall the energy lower bound for soundness is
$$ \max_{1\leq i\leq m} \langle \psi_i|H|\psi_i\rangle \geq \max\Big\{ \delta^2, \,\frac{\beta}{2}\Big( \Big(\frac{1-\eps}{m}\Big)^2-2(m+1)\delta\Big)-2m\delta\Big\}.$$ 
To conclude it suffices to verify  that for any $0\leq \delta \leq 1$, this is $\Omega(\beta^2/m^6)$, as claimed. If $\delta = \Omega(\beta/m^3)$, then recall that since $\beta \leq 2$ due to our normalization of $A$ and $B$, the statement is true. If $\delta = O(\beta/m^3)$ for a small enough implicit constant, the second expression in the max is dominated by $(\beta/2)((1-\eps)/m)^2 = \Omega(\beta^2/m^2) = \Omega(\beta^2/m^6)$. 

\end{proof}

With Theorem~\ref{thm:2layertraversal} in hand, $\mathsf{QCMA}$-hardness of
$\cgscon$ follows from $\mathsf{QCMA}$-hardness of the $2$-layer problem.

\begin{proof}[Proof of Theorem \ref{maintheorem}]
Since commuting local Hamiltonians are a subset of general local Hamiltonians, the containment $21$-$\cgscon(2^{-\poly(n)},\poly^{-1}(n)) \in \mathsf{QCMA}$ follows from \cite{gharibian2015ground}. 
To see that the problem is $\mathsf{QCMA}$-hard, let $A+B$ be an instance of the $2$-layer problem with completeness and soundness parameters $c,s$ constructed from the $\mathsf{QCMA}$ verifier circuit for some language $L\in\mathsf{QCMA}$ according to Lemma \ref{lem:2layer}. 
Construct the Hamiltonian  $H=H_{A}+H_{B}+G$ from $A+B$ as in the proof of Theorem \ref{thm:2layertraversal}. Then $H$ is a $21$-local commuting Hamiltonian. Moreover, the theorem states that the 2-layer problem instance $A+B$ has the same solution (yes/no) as the $\cgscon$ instance specified by $H$, its ground states $|\psi_0\rangle,|\phi\rangle$, completeness parameter $c/2=2^{-\text{poly}(n)}$ and soundness parameter $\Omega(s^2/m^6)=\poly^{-1}(n)$.  Since the $2$-layer problem is $\mathsf{QCMA}$-hard (by lemma \ref{lem:2layer}), this establishes that $21$-$\cgscon(2^{-\poly(n)},\poly^{-1}(n))$ is also $\mathsf{QCMA}$-hard.
\end{proof}

\section{\label{sec:Conclusion-and-Open}Conclusions}

We have shown that the ground space connectivity problem for commuting local
Hamiltonians is $\mathsf{QCMA}$-complete, strengthening the previous result \cite{gharibian2015ground} for general local Hamiltonians. Thus, the commutativity restriction does not lower the complexity of this problem, as one might have a priori expected. It remains open to determine whether there exist instances of the commuting local Hamiltonian problem that are $\mathsf{QMA}$-hard. 

\section*{Acknowledgments}
T.V. is~supported by NSF CAREER grant CCF-1553477 and an AFOSR YIP award number FA9550-16-1-0495. All authors were partially supported by the IQIM, an NSF Physics Frontiers Center (NSF Grant PHY-1125565) with support of the Gordon and Betty Moore Foundation (GBMF-12500028).  

\bibliographystyle{alpha}
\bibliography{commcon}

\appendix

\section{Proof of Lemma \ref{lem:2layer}}\label{appA}

In this appendix we prove that the 2-layer problem is $\mathsf{QCMA}$-complete. 

Below we use the term ``$(k,l)$-local Hamiltonian'' to describe a $k$-local Hamiltonian such that each qubit is acted on nontrivially by at most $l$ terms.

The first step consists in applying a slight variant of Kitaev's circuit to Hamiltonian construction, which ensures that the Hamiltonian constructed from any $\mathsf{QMA}$ or $\mathsf{QCMA}$ verification circuit is not only local but also such that every qubit is acted (nontrivially) upon by a constant number of local Hamiltonian terms. In particular, it is a $(5,4)$-local Hamiltonian. This step is crucial, as without it our construction would eventually result in an $\Omega(\log n)$-local commuting Hamiltonian instead of a commuting Hamiltonian with constant locality. 

The idea to achieve this is simple: we introduce SWAP gates to ensure that every qubit is acted upon by at most $3$ unitaries: swapping in the state of a given qubit,
applying a circuit unitary, and swapping the state out to some new location. Once the circuit has been modified in this way, Kitaev's construction, which converts a circuit consisting of $2$-local unitaries to a $5$-local Hamiltonian through the use of a unary clock, yields a Hamiltonian with the desired property. We omit the details,
and the interested reader is referred to Section 2 in  \cite{oliveira2008complexity}.
We state the result in the following lemma.

\begin{lem}
	\label{lem:5,4-lhp}Let $C=U_{m}...U_{1}$ be an $n$-qubit quantum circuit where $m=$poly$(n)$ such that each qubit is acted upon by at most three gates. Then there exists a 
	 $(5,4)$-local Hamiltonian $H=\sum_{i=1}^{m'} H_{i}$ acting on $n'=\poly(n,m)$
	qubits such that the following holds:
	\begin{enumerate}
		\item If there exists a state $|\psi\rangle$ such that $C$ accepts $|\psi\rangle$ with
		probability $1-\epsilon$, then $\langle\phi|H|\phi\rangle\leq\frac{\epsilon}{\poly(m)}$,
		where 
		\begin{equation}
		|\phi\rangle=\frac{1}{\sqrt{m+1}}\sum_{i=0}^{m}U_{i}...U_{1}|\psi\rangle\otimes|0\rangle\otimes|i\rangle,
		\label{eq:hist}
		\end{equation}
		with $i$ written in unary notation.
		\item If $C$ accepts any state $|\psi\rangle$ with probability at most $\epsilon$,
		then for all $|\psi\rangle$, $\langle\psi|H|\psi\rangle\geq\frac{1-\epsilon}{\poly(m)}$.
	\end{enumerate}
\end{lem}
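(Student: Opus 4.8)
The plan is to apply Kitaev's circuit-to-Hamiltonian construction with a unary clock, taking care to bound the number of Hamiltonian terms acting on each qubit. Introduce a clock register of $m$ qubits, encoding clock time $t$ as the unary string $\ket{1^t 0^{m-t}}$, and set
\[
H = H_{in} + H_{clock} + H_{prop} + H_{out},
\]
where $H_{in}$ penalizes each ancilla qubit that is not $\ket0$ at clock time $0$, $H_{clock}$ penalizes illegal clock strings by placing a penalty on every adjacent pair of clock qubits in the configuration $\ket{01}$, $H_{out}$ penalizes the output qubit being $\ket0$ at clock time $m$, and $H_{prop}=\sum_{t=1}^m H_{prop,t}$ with each $H_{prop,t}$ of the standard Feynman--Kitaev form acting on the two qubits touched by $U_t$ together with clock qubits $t-1,t,t+1$. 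Each propagation term is then $5$-local and the remaining terms are $2$- or $3$-local, so $H$ is $5$-local; the terms are projectors or convex-type combinations thereof, so after the usual rescaling they are positive semidefinite of norm at most $1$ as required.

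To get the ``$(5,4)$'' bound I would count, for each qubit, how many terms act on it. This is precisely where the hypothesis that every qubit of $C$ is touched by at most three gates is used: a data or ancilla qubit appears in at most three propagation terms, plus at most one of the terms of $H_{in}$ or $H_{out}$, hence in at most four terms. A clock qubit $t$ appears only in $H_{prop,t-1}$, $H_{prop,t}$, $H_{prop,t+1}$ and in the constant number of clock-penalty terms involving it, which after grouping the penalties one adjacent pair at a time is again bounded by a small constant that one checks equals $4$. This bookkeeping, together with the SWAP-insertion trick that produces a circuit with the three-gate property in the first place, is carried out in detail in Section~2 of~\cite{oliveira2008complexity}, which I would cite rather than reproduce.

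For completeness, a direct calculation shows that the history state $\ket\phi$ of~\eqref{eq:hist} lies in the kernel of $H_{clock}+H_{prop}$ (it is the standard uniform superposition over the legal computation path) and, since $\ket\psi$ is a valid input to $C$, is annihilated by $H_{in}$; the only nonzero contribution is $\bra\phi H_{out}\ket\phi = \tfrac1{m+1}\,p_{\mathrm{rej}}(\psi)\le \tfrac{\epsilon}{m+1}$, where $p_{\mathrm{rej}}(\psi)$ is the rejection probability of $C$ on $\ket\psi$. This is $\le \epsilon/\poly(m)$, as claimed.

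Soundness is the step I expect to require the most care. On the span of legal-clock states, $H_{prop}+H_{clock}$ acts as the Laplacian of a path on $m+1$ vertices conjugated by the partial products $U_t\cdots U_1$, whose kernel is exactly the span of history states of arbitrary (not necessarily correctly initialized) inputs and whose smallest nonzero eigenvalue is $\Omega(1/m^2)$; on the orthogonal complement of the legal-clock subspace, $H_{clock}$ contributes $\Omega(1)$. One then invokes Kitaev's geometric lemma bounding $\lambda_{\min}(H_A+H_B)$ for positive semidefinite $H_A,H_B$ in terms of their spectral gaps above zero and the angle between $\ker H_A$ and $\ker H_B$, taking $H_A=H_{prop}+H_{clock}$ and $H_B=H_{in}+H_{out}$: if $C$ rejects every input with probability at least $1-\epsilon$, then no history state is close to $\ker(H_{in}+H_{out})$ (the correctly-initialized, rejecting states), so the angle is bounded away from zero and $\lambda_{\min}(H)\ge (1-\epsilon)/\poly(m)$. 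The only real work is tracking the various $\poly(m)$ factors --- the $1/m^2$ clock gap, the number of terms, and the norm rescaling --- and checking that they combine into a single $\poly(m)$ denominator; the argument is otherwise entirely standard.
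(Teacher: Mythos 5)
Your proposal takes essentially the same route as the paper. The paper's own treatment of this lemma is deliberately brief: it states the SWAP-insertion trick to obtain a circuit in which each qubit is touched by at most three gates, then asserts that applying the standard Kitaev unary-clock circuit-to-Hamiltonian construction to such a circuit yields a $(5,4)$-local Hamiltonian, and explicitly defers the bookkeeping to Section~2 of Oliveira--Terhal~\cite{oliveira2008complexity}. You have unpacked the same construction in more detail — $H_{in}+H_{clock}+H_{prop}+H_{out}$, history-state completeness, Kitaev's geometric lemma for soundness — while also deferring the final per-qubit degree count and the SWAP-insertion details to the same reference, which is exactly what the paper does. Your hand-waving on the clock-qubit degree count (``one checks equals $4$'') is at the same level of informality as the paper's ``we omit the details,'' so there is no substantive discrepancy.
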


By starting with an error-amplified $\mathsf{QCMA}$ verifier for which the $\epsilon$ parameter is exponentially small, we can make the completeness and soundness parameters  $2^{-\text{poly}(n)}$ and  $\frac{1}{\text{poly}(n)}$ respectively in Lemma \ref{lem:5,4-lhp}. The second step of our construction consists in showing how the resulting instance of the $(5,4)$-local Hamiltonian
can be converted into a 2-layered local Hamiltonian such that all terms inside each of the layers commute, and are $15$-local. The completeness and soundness parameters will remain of order $2^{-\text{poly}(n)}$ and $\frac{1}{\text{poly}(n)}$ respectively.

We explain the intuition behind the construction. The first step is to individually modify each of the local terms from the input Hamiltonian instance into local terms such that the new Hamiltonians have orthogonal range. This results in a local Hamiltonian $G$ that is trivially commuting. However, it is now easy for a prover to specify a state in the simultaneous ground space of all the local terms, irrespective of whether the initial instance was satisfiable or not. 

To prevent this we create another Hamiltonian $R$ that forces any
ground state to have high overlap with the ground space of
each Hamiltonian in $G$. Specifically, consider a qubit $j\in\{1,\ldots,n\}$, and let $H_{i_1(j)},H_{i_2(j)},H_{i_3(j)},H_{i_4(j)}$ be the local terms acting on it. We first allocate an ancilla qudit of dimension $4$ (equivalently, two qubits) $j'\in\{1,\ldots,n\}$ for $j$. We then create
four new Hamiltonians $H_{i_t(j)}'=H_{i_t(j)}\otimes|t-1\rangle\langle t-1|$ for $t\in\{1,\ldots,4\}$,
where the second projector acts on $j'$, and $G=H'_{i_1(j)}+H'_{i_2(j)}+H'_{i_3(j)}+H'_{i_4(j)}$. Note that a malicious prover can now easily create a state in the simultaneous null eigenspace of $H'_{i_2(j)},H'_{i_3(j)},H'_{i_4(j)}$
by creating a state that is $|00\rangle$ on $j'$. To prevent this, we introduce a penalty Hamiltonian $R=I\otimes(I-|\gamma\rangle\langle\gamma|)$,
where $|\gamma\rangle=\frac{1}{2}\sum_{i=0}^{3}|i\rangle$. Provided a large enough energy penalty is associated with $R$, any state in a low eigenspace of the $H_{i_t(j)}'$ is forced
to be close to the null eigenspace of $R$ on $j'$, namely,
by being in the state $|\gamma\rangle$, effectively translating the commuting terms in $G$ into a sum of the non-commuting original terms. 
In this case, there is a constant overlap with the projectors
of each of $H'_{i_t(j)}$, $t\in\{1,\ldots,4\}$ on the Hilbert space of
the qudit $j'$, and thus none of the Hamiltonians
can be trivially annihilated.

\begin{lem}
(Ground Space Preserving Layers) \label{lem:(Ground-Space-Preserving}
Let $H=\sum_{i=1}^{m}H_{i}$ be a $(5,4)$-local Hamiltonian acting on a system of $n$ qubits numbered
from $1$ to $n$,
where $\|H_{i}\|\leq1$.
For each qubit $j\in \{1,\ldots,n\}$, introduce a $4$-dimensional ancilla qudit $j'$. Let $c$
and $s$ be two reals such that $s-c\geq m^{-b}$ (for some
$b>0$).
	
For $i\in\{1,\ldots,m\}$,  let $H_{i}$ act on qubits $\{j_1(i),\ldots,j_5(i)\}$. For $t\in\{1,\ldots,5\}$, let $p_t(i)\in\{1,\ldots,4\}$ be such that $H_i$ is the $p_t(i)$-th Hamiltonian acting on $j_t(i)$. 
Define $G_{i}$
acting on qubits $\{j_{1}(i),...,j_{5}(i)\}$ and $\{j'_1(1),\ldots,j'_5(i)\}$ as
\[
G_{i}=H_{i}\otimes|p_{1}(i)-1\rangle\langle p_{1}(i)-1|\otimes...\otimes|p_{5}(i)-1\rangle\langle p_{5}(i)-1|.
\]
Let $|\gamma\rangle=\frac{1}{2}\sum_{i=0}^{3}|i\rangle$, and for every qubit
$j\in\{1,\ldots,n\}$ define $R_{j}$ acting on $j'$ 	as 
\[
R_{j}=I-|\gamma\rangle\langle\gamma|.
\]
Let $G=\sum_{i=1}^{m}G_{i}$ and $R=m^{r}\sum_{j=1}^{n}R_{j}$, where $r=b+5$. Then
the following hold, for $\kappa=2^{10}$: 
	\begin{enumerate}
		\item If there exists a state $|\psi\rangle$ such that $\langle\psi|H|\psi\rangle\leq c$,
		then the state $|\psi'\rangle=|\psi\rangle\otimes |\gamma\rangle^{\otimes n}$ satisfies $\langle\psi'|(G+R)|\psi'\rangle\leq\frac{c}{\kappa}$.
		\item If for all states $|\psi\rangle$, $\langle\psi|H|\psi\rangle\geq s$,
		then for all states $|\psi'\rangle$ (on the extended system), $\langle\psi'|(G+R)|\psi'\rangle\geq\frac{s}{\kappa}-\frac{1}{m^{b+1}}$.
	\end{enumerate}

\end{lem}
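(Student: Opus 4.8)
The plan is to prove the two items separately: item 1 is a one-line computation, while item 2 carries the real content.

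\emph{Item 1.} I would take $|\psi'\rangle=|\psi\rangle\otimes|\gamma\rangle^{\otimes n}$. Each $R_j$ acts on the single ancilla $j'$ as $I-|\gamma\rangle\langle\gamma|$ and annihilates $|\gamma\rangle$, so $\langle\psi'|R|\psi'\rangle=0$. For every term, $\langle\psi'|G_i|\psi'\rangle=\langle\psi|H_i|\psi\rangle\prod_{t=1}^5|\langle\gamma|p_t(i)-1\rangle|^2=\langle\psi|H_i|\psi\rangle/4^5$, using that $|\langle\gamma|x\rangle|^2=1/4$ for every computational-basis state $|x\rangle$ of the $4$-dimensional qudit. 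Summing over $i$ gives $\langle\psi'|(G+R)|\psi'\rangle=\langle\psi|H|\psi\rangle/\kappa\le c/\kappa$. This is also where $\kappa=(1/4)^{-5}=2^{10}$ enters: each $G_i$ carries exactly five ancilla projectors.

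\emph{Item 2.} Fix an arbitrary state $|\psi'\rangle$ on the extended system and set $\eta=\sum_{j=1}^n\langle\psi'|R_j|\psi'\rangle$, so that the $R$-energy equals $m^r\eta$. Let $\Pi_\gamma=I\otimes\bigotimes_{j=1}^n|\gamma\rangle\langle\gamma|_{j'}$ be the projector onto ``all ancillas in state $|\gamma\rangle$''. The structural input I would use is the union-bound operator inequality $I-\Pi_\gamma\le\sum_{j=1}^n R_j$, valid because the single-qudit projectors $|\gamma\rangle\langle\gamma|_{j'}$ act on distinct tensor factors and hence are simultaneously diagonalizable. Writing $|\phi_0\rangle=\Pi_\gamma|\psi'\rangle$ and $|\phi_1\rangle=(I-\Pi_\gamma)|\psi'\rangle$, this gives $\||\phi_1\rangle\|^2\le\eta$, and $|\phi_0\rangle=|\chi\rangle\otimes|\gamma\rangle^{\otimes n}$ for a subnormalized $|\chi\rangle$ on the $n$ original qubits with $\||\chi\rangle\|^2=1-\||\phi_1\rangle\|^2$. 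Repeating the computation from item 1, $\langle\phi_0|G|\phi_0\rangle=\langle\chi|H|\chi\rangle/\kappa\ge s\||\chi\rangle\|^2/\kappa\ge s(1-\eta)/\kappa$, where the first inequality is the NO hypothesis $H\ge sI$. Using $G\ge0$ to drop the nonnegative term $\langle\phi_1|G|\phi_1\rangle$ and $\|G\|\le\sum_i\|G_i\|\le m$ together with $\||\phi_1\rangle\|\le\sqrt\eta$, the cross term obeys $2\,\mathrm{Re}\langle\phi_0|G|\phi_1\rangle\ge-2m\sqrt\eta$, so
\[
\langle\psi'|(G+R)|\psi'\rangle\ \ge\ \frac{s}{\kappa}\ +\ \eta\Big(m^r-\frac{s}{\kappa}\Big)\ -\ 2m\sqrt\eta .
\]
Since the NO case forces $s$ to be at most the largest eigenvalue of $H$, hence $s\le m$, the coefficient $m^r-s/\kappa$ is positive; minimizing the right-hand side as an upward parabola in $\sqrt\eta$ over $\eta\ge0$ yields the value $s/\kappa-m^2/(m^r-s/\kappa)\ge s/\kappa-m^2/(m^r-m)$, and with $r=b+5$ a direct estimate shows $m^2/(m^r-m)\le 1/m^{b+1}$ for $m\ge2$ (small $m$ being trivial). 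This is item 2.

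The step I expect to be the main obstacle is the soundness argument: a general state on the extended system is \emph{not} of the product form $|\chi\rangle\otimes|\gamma\rangle^{\otimes n}$, so one must quantify how far it can be from such a state in terms of the $R$-penalty it pays, and then check that the large coefficient $m^r$ on $R$ beats both error terms $-s\eta/\kappa$ and $-2m\sqrt\eta$ coming from the deviation. Getting this balance right is exactly what dictates the choice $r=b+5$ and the $1/m^{b+1}$ loss. Everything else — the completeness line and keeping track of the constant $\kappa=2^{10}$ coming from $(1/4)^5$ — is routine.
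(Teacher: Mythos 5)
Your proof is correct and follows essentially the same route as the paper: the completeness line is identical, and the soundness argument decomposes $\ket{\psi'}$ by projecting onto $\ket{\gamma}^{\otimes n}$ versus its complement, bounds the complement's weight by the $R$-energy, lower-bounds the projected piece's $G$-energy using $H\ge sI$, controls cross terms by Cauchy-Schwarz with $\|G\|\le m$, and minimizes the resulting parabola—exactly as in the paper. The only cosmetic differences are that you package the $R$-energy bound as the operator union bound $I-\Pi_\gamma\le\sum_j R_j$ rather than expanding in the eigenbasis of $R$, and you discard the nonnegative term $\langle\phi_1|G|\phi_1\rangle\ge 0$ rather than bounding it in absolute value, yielding a marginally tighter constant.
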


\begin{proof}

\emph{(Completeness) }Let $|\psi\rangle$ be such that $\langle\psi|H|\psi\rangle\leq c$.
Define $|\psi'\rangle=|\psi\rangle\otimes|\phi\rangle$, where $|\phi\rangle=|\gamma\rangle\otimes...\otimes|\gamma\rangle$
($n$ times) and where $|\psi\rangle$ is a state on the initial $n$ qubits
and each copy of $|\gamma\rangle$ acts on one of the ancilla qudits. By direct calculation, 

\begin{eqnarray*}
	\langle\psi'|(G+R)|\psi'\rangle & = & \langle\psi|\langle\phi|G|\psi\rangle|\phi\rangle+\langle\psi|\langle\phi|R|\psi\rangle|\phi\rangle\\
	& = & \sum_{i=1}^{m}\langle\psi|H_{i}|\psi\rangle|\langle\gamma|p_{1}(i)-1\rangle|^{2}...|\langle\gamma|p_{5}(i)-1\rangle|^{2}+0\\
	& = & \sum_{i=1}^{m}\frac{1}{4^{5}}\langle\psi|H_{i}|\psi\rangle\\
	& \leq & \frac{c}{\kappa}.
\end{eqnarray*}
	
\emph{(Soundness)} 
Let $|\phi_0\rangle, |\phi_1\rangle, \ldots, |\phi_d\rangle $, for $d=2^{2n}-1$, be an orthonormal basis on the $n$ ancilla qudits consisting of eigenvectors of $R$. Let $\lambda_0\leq \cdots \leq \lambda_{d}$ be the eigenvalues of $R$, counted with multiplicity. Then 
$|\phi_0\rangle=|\gamma\rangle^{\otimes n}$ and $\lambda_0=0$; moreover for $1\leq i \leq d$, $\lambda_i\geq m^r$. 
	
We proceed by contradiction. Let $|\psi'\rangle$ be a state such
that $\langle\psi'|(G+R)|\psi'\rangle\leq\frac{s}{\kappa}-\frac{1}{m^{b+1}}$.
Expand $|\psi'\rangle = \sum_{i=0}^d |a_i\rangle |\phi_i\rangle$, where $\{|a_0\rangle, \ldots, |a_d\rangle\}$ are vectors on the initial $n$ qubits. Let $|u\rangle = |a_0\rangle |\phi_0\rangle $ and $|v\rangle = \sum_{i=1}^d |a_i\rangle |\phi_i\rangle$, so that $|\psi'\rangle = |u\rangle + |v\rangle $. Let 
$$c_0 = \|\ket{u}\| = \|\ket{a_0}\|\qquad\text{and}\qquad c_1 = \|\ket{v}\| = \Big(\sum_{i=1}^d \|\ket{a_i}\|^2\Big)^{1/2}.$$
Since $\langle u, v\rangle = 0$, we have $1 = \||\psi'\rangle\|^2 = c_0^2 + c_1^2$. 
Then
\[
\langle\psi'|R|\psi'\rangle = \sum_{i=0}^d \lambda_i \| |a_i\rangle | \|^2  \geq
m^{r} c_{1}^{2}.
\]
Further, 
\begin{equation}\label{eq:g-1}
	\langle u|G|u\rangle =  
	\langle a_0|\langle \phi_0|G |a_0\rangle |\phi_0\rangle =
	\frac{1}{\kappa} \langle a_0|H|a_0\rangle \geq
	\frac{1}{\kappa} s\langle a_0|a_0\rangle =
	\frac{sc_0^2}{\kappa},
\end{equation}
where the inequality uses the soundness condition $\forall|\psi\rangle,\langle\psi|H|\psi\rangle\geq s \langle\psi|\psi\rangle$. Also note that 	
\[
	|\langle u|G|v\rangle| =
	|\langle v|G|u\rangle| \leq 
	\|v\|\|Gu\| \leq m\|v\|\|u\| \leq mc_1, 
\]
where the first inequality is Cauchy-Schwarz and the second uses the fact that $\|G\| \leq m$. Similarly, $|\langle v|G|v\rangle| \leq mc_1$, thus using~\eqref{eq:g-1},
\[
	\langle\psi'|G|\psi'\rangle \geq
	\langle u|G|u\rangle - |\langle u|G|v\rangle| - |\langle v|G|u\rangle| - |\langle v|G|v\rangle| \geq 
	\frac{sc_{0}^{2}}{\kappa}-3c_{1}m.
\]	
We then get 
\begin{eqnarray*}
	\langle\psi'|(G+R)|\psi'\rangle 
	& \geq & \frac{sc_{0}^{2}}{\kappa}-3c_{1}m+c_{1}^{2}m^{r}\\
	& = & \frac{s}{\kappa}(1-c_{1}^{2})-3c_{1}m+c_{1}^{2}m^{r}\\
	& \geq & \frac{s}{\kappa}-\frac{1.5^{2}m^{2}}{m^{r}-\frac{s}{\kappa}}\\
	& \geq & \frac{s}{\kappa}-\frac{1}{m^{b+2}},
\end{eqnarray*}
where the expression in the third line was obtained by setting $c_{1}=\frac{1.5m}{m^{r}-\frac{s}{\kappa}}$,
which minimizes the expression in the second line. 		
We have reached a contradiction with the assumption
$\langle\psi'|(G+R)|\psi'\rangle\leq\frac{s}{\kappa}-\frac{1}{m^{b+1}}$.
	
\end{proof}

\begin{rem*}
Any
Hamiltonian in layer $G$ acts on the $5$ qubits the original Hamiltonian
acted on, and further, the new Hamiltonian acts on one qudit or two ancilla qubits
for every qubit the original Hamiltonian acted on, increasing the locality by $10$, making
it $15$-local. Each of the Hamiltonians in $R$ act on two qubits. Further, all the Hamiltonians in the layers $G$ and $R$ commute within the layers.
\end{rem*}

Lemma~\ref{lem:2layer} is an  immediate corollary of Lemma~\ref{lem:5,4-lhp} and Lemma~\ref{lem:(Ground-Space-Preserving}.

\begin{proof}[Proof of Lemma~\ref{lem:2layer}]
The containment in $\mathsf{QCMA}$ is immediate since the prover can provide the classical description of the poly($n$)-sized quantum circuit consisting of 2-qubit unitaries that prepares the state $|\psi\rangle$ from $|0\rangle^{\otimes n}$ such that $\langle \psi|(A+B)|\psi \rangle\leq c$ if such an efficiently preparable state exists. 
To see $\mathsf{QCMA}$-hardness, let $V$ be the $\mathsf{QCMA}$ verifier for a language $L\in\mathsf{QCMA}$, $H=\sum_{i=1}^{m}H_{i}$ the $n$-qubit $(5,4)$-local Hamiltonian constructed from $V$ and an input $x$ to $L$ as in Lemma~\ref{lem:5,4-lhp}. Using standard error amplification for $\mathsf{QCMA}$, we may assume that the parameter $\eps$ in the lemma is $\eps=2^{-\poly(n)}$. 

Let $G_i$, $i\in\{1,\ldots,m\}$ and $R_j$, $j\in\{1,\ldots,n\}$, be the local projectors defined in Lemma~\ref{lem:(Ground-Space-Preserving}. Note that the $G_i$ (resp. $R_j$) are mutually commuting. Define 
\begin{equation}
\label{eq:AB}
A = \frac{1}{nm^{r}}\sum_i G_i \qquad \qquad B = \frac{1}{n}\sum_j R_j,
\end{equation}
where $r=O(1)$ is selected as in the lemma.  The normalizing factors in Equation \ref{eq:AB} are chosen to ensure that $0\leq A,B\leq I$, and so that $A+B$ is proportional to $G+R$ as defined in the lemma. The proportionality factor is $n^{-1}m^{-r}$ which is inverse polynomially small. $H'=A+B$ is a 2-layered $15$-local Hamiltonian whose ground state energy is exponentially small if $x\in L$ (i.e. $V$ accepts $x$) and at least inverse polynomially large if $x\notin L$ (i.e. $V$ rejects $x$).  Moreover, if $x\in L$ there is an efficiently preparable state which achieves exponentially small energy for $H$ (it is a tensor product of the history state~\eqref{eq:hist} and a symmetric product state).
\end{proof}

\end{document}